\numberwithin{theorem}{section}
\newcommand{\TheTitle}{Optimal containment of epidemics over temporal activity-driven networks} 
\newcommand{\TheAuthors}{M. Ogura, V. M. Preciado, and N. Masuda}
\title{{\TheTitle}\thanks{Submitted to the editors DATE.
}}
\author{
  Masaki Ogura\thanks{Graduate School of Information Science, Nara Institute of Science and Technology, Ikoma, Nara (\email{oguram@is.naist.jp}).}
\and 
Victor M. Preciado\thanks{Department of Electrical and Systems Engineering, University of Pennsylvania, Philadelphia, PA USA (\email{preciado@seas.upenn.edu}).}
  \and
  Naoki Masuda\thanks{Department of Engineering Mathematics, University of Bristol, Clifton, Bristol BS8 1UB (\email{naoki.masuda@bristol.ac.uk}).}
}
\DeclareSymbolFont{bbold}{U}{bbold}{m}{n}
\DeclareSymbolFontAlphabet{\mathbbold}{bbold}
\newcommand{\onev}{\mathbbold{1}}
\DeclareMathOperator{\st}{subject\ to}
\DeclareMathOperator*{\minimize}{minimize}
\newcommand{\norm}[1]{\lVert{#1}\rVert}
\newcommand{\abs}[1]{\lvert{#1}\rvert}
\newcommand{\av}[1]{\langle{#1}\rangle}
\crefname{hypothesis}{Hypothesis}{Hypotheses} 
\crefname{problem}{Problem}{Problems} 
\newcommand{\ubar}[1]{\underaccent{\bar}{#1}} 
\newcommand{\myubar}[2]{\raisebox{-#2}{\underbar{\raisebox{#2}{#1}}}}
\begin{document}

\maketitle

\begin{abstract}
In this paper, we study the dynamics of epidemic processes taking place in temporal and adaptive networks. Building on the activity-driven network model, we propose an adaptive model of epidemic processes, where the network topology dynamically changes due to both exogenous factors independent of the epidemic dynamics as well as endogenous preventive measures adopted by individuals in response to the state of the infection. A direct analysis of the model using Markov processes involves the spectral analysis of a transition probability matrix whose size grows exponentially with the number of nodes. To overcome this limitation, we derive an upper-bound on the decay rate of the number of infected nodes in terms of the eigenvalues of a $2 \times 2$ matrix. Using this upper bound, we propose an efficient algorithm to tune the parameters describing the endogenous preventive measures in order to contain epidemics over time. We confirm our theoretical results via numerical simulations.
\end{abstract}


\begin{keywords}
Temporal networks, adaptive networks, epidemics, stochastic processes, convex optimization. 
\end{keywords}

\begin{AMS}
39A50, 
60J10, 
90C25, 
91D10, 
91D30
\end{AMS}

\newcommand{\af}{\chi}
\newcommand{\ap}{\pi}
\newcommand{\costAF}{f}
\newcommand{\costAP}{g}

\section[Introduction]{Introduction}

Accurate prediction and cost-effective containment of epidemics in human and animal populations are fundamental problems in mathematical epidemiology~\cite{Pastor-Satorras2015a,Diekmann2000,Kiss2017}. In order to achieve these goals, it is indispensable to develop effective mathematical models describing the spread of disease in human and animal contact networks~\cite{Funk2010,Gross2008}. 
In this direction, we find a broad literature on modeling, analysis, and containment of epidemic processes in static contact networks. However, these works neglect an important factor: the temporality of the interactions~\cite{Isella2011a,Stehle2011,Sun2013a}, which arises either independently of or dependent on epidemic propagations. A framework for modeling temporal interactions in human and animal populations is temporal networks (i.e., time-varying networks), where individuals and interactions are modeled as nodes and edges, respectively, which can appear and disappear over time~\cite{Holme2015b,Masuda2016b,Holme2012}. Under this framework, the effect of temporal interactions on epidemic propagations has been investigated numerically and theoretically~\cite{Masuda2017,Masuda2013}. For containing epidemic processes on temporal networks, we find a plethora of heuristic approaches~\cite{Prakash2010,Lee2012} and analytical methods~\cite{Ogura2015c,Liu2014a}.

Adaptive networks refer to the case in which changes in nodes or edges occur in response to the state of the dynamics taking place on the network~\cite{Masuda2016b,Gross2008,Gross2009,Sayama2013}. A common temporality of agent-agent interaction in epidemic dynamics arises from social distancing behavior~\cite{Aledort2007,Bootsma2007,Bell2006}, which let the structure of contact networks change over time as a result of adaptation to the state of the epidemics. Several models of such adaptive networks have been proposed. For example, Gross et al.~proposed a rewiring mechanism where a healthy node actively avoids to be adjacent to infected nodes~\cite{Gross2006}. Extensions of this model are found in~\cite{Gross2008,Zanette2008a,Marceau2010,Lagorio2011,Tunc2014}. Guo et al.~proposed an alternative model in which links connecting an infected node and a healthy node are deactivated~\cite{Guo2013}. As for the containment of epidemic processes on adaptive networks, various heuristic~\cite{Bu2013,Maharaj2012} and  analytical~\cite{Ogura2015i,Ogura2016l} approaches have been proposed. However, in many studies, the effects of exogenous temporal factors and endogenous adaptive measures on epidemic processes have been separately examined, leaving unclear how their combination affects the dynamics of the spread.

In this paper, we study epidemic processes and containment strategies in a temporal network model where the effect of exogenous factors and that of adaptive measures are simultaneously present. Our model is based on the activity-driven temporal network model~\cite{Perra2012}. In this model, a node is stochastically activated and connects to other nodes independently of the dynamics taking place in the network. In order to analyze the joint effect of exogenous factors and endogenous adaptations, we add a mechanism of social distancing to the standard activity-driven model. In other words, we allow an infected node to endogenously adapt to the state of the epidemics by  1)~decreasing its activation probability and 2) refusing interactions with other activated nodes. On top of this temporal network, we adopt the standard susceptible-infected-susceptible (SIS) model of epidemic dynamics~(see, e.g., \cite{Pastor-Satorras2015a}) and derive an analytical upper bound on the decay rate of the number of infected nodes over time. Based on this result, we then propose an efficient strategy for tuning the social distancing rates in order to suppress the number of infected nodes.

Our work is related to \cite{Rizzo2014}, in which an infected individual is allowed to decrease its activation probability. However, in \cite{Rizzo2014}, the durations of temporal interactions are assumed to be sufficiently short compared with the time scale of the epidemic dynamics, leaving out the interesting case where the time scale of the network dynamics and that of the epidemic process are comparable. In addition, our results hold true for networks of any size, while the results in~\cite{Rizzo2014} require the networks to be sufficiently large.

This paper is organized as follows. In \Cref{sec:prbSetting}, we introduce a model of epidemic processes on temporal and adaptive networks. In \Cref{sec:decayRate}, we derive an upper bound on the decay rate of the infection size. Based on this bound, in \Cref{sec:optimizaiton} we formulate and solve optimization problems for containing the spread of epidemics. The obtained theoretical results are numerically illustrated in \Cref{ref:numerical}.

\section{Problem setting}\label{sec:prbSetting}

In this section, we first describe the activity-driven network proposed in~\cite{Perra2012}. We then introduce an adaptive SIS (A-SIS) model on activity-driven networks, which allows nodes to adapt to the state of the nodes (i.e., susceptible or infected) in their neighborhoods.

\subsection{Activity-driven networks}\label{sec:adsis}

Throughout this paper, we let the set of nodes in a network be given by $\mathcal V = \{v_1, \dotsc, v_n \}$. The activity-driven network is a temporal network in discrete time and is defined as follows.

\begin{figure}[tb]
\centering \includegraphics[clip,trim={2cm 4.2cm 1.2cm
6.8cm},width=.65\linewidth]{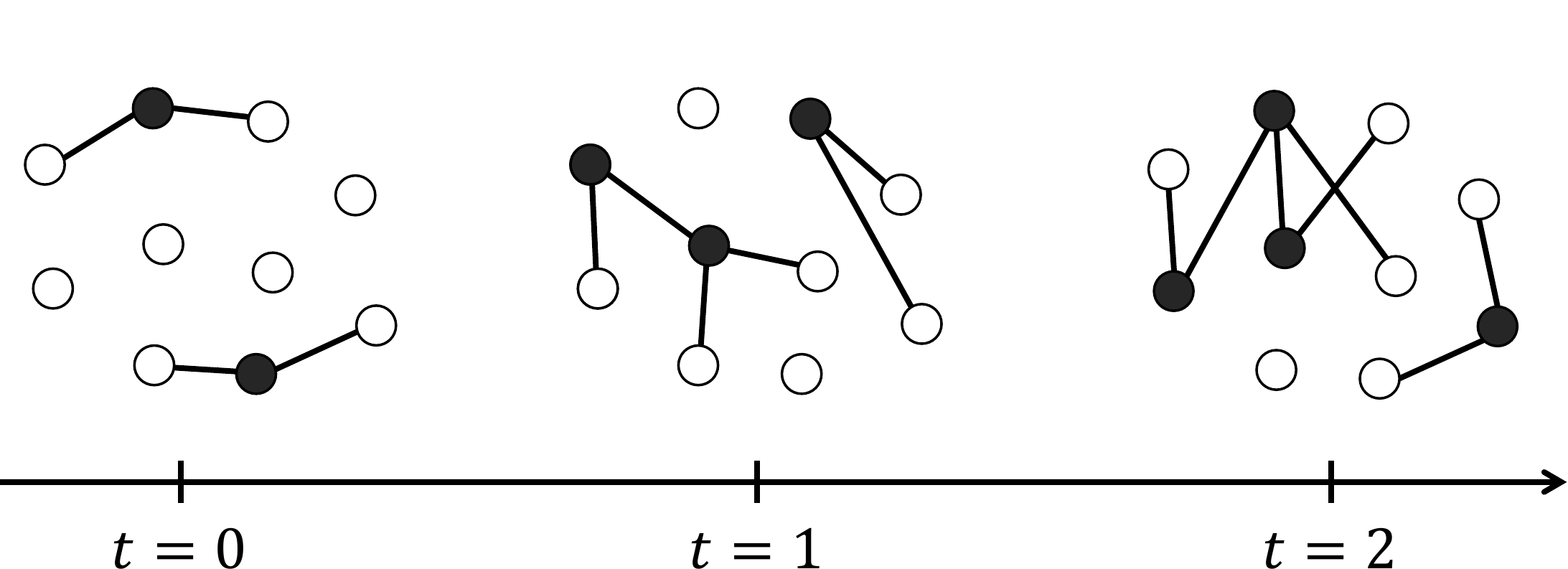} \caption{Schematic on an activity-driven
network. We set $n=10$ and $m=2$. Filled circles represent active nodes. Open circles represent inactive nodes. The time is denoted by $t$.}
\label{fig:adn}
\end{figure}

\begin{definition}[\cite{Perra2012}]\label{eq:ADM}
For each $i=1, \dotsc, n$, let $a_i$ be a positive constant less than or equal to $1$. We call $a_i$ the \emph{activity rate} of node $v_i$. Let $m$ be a positive integer less than or equal to $n-1$. The \emph{activity-driven network} is defined as an independent and identically distributed sequence of undirected graphs created by the following procedure (see \cref{fig:adn} for a schematic illustration):
\begin{enumerate}
\item At each time $t = 0, 1, 2, \dotsc$, each node $v_i$ becomes ``activated'' with
probability~$a_i$ independently of other nodes.

\item 
Each activated node, say, $v_i$, randomly and uniformly chooses $m$ other nodes independently of other activated nodes. For each chosen node, say, $v_j$, an edge $\{v_i, v_j\}$ is created. These edges are discarded at time $t+1$ (i.e., do not exist at time $t+1$).

\item Steps 2 and 3 are repeated for each time $t \geq 0$, independently of
past realizations.
\end{enumerate}
\end{definition}

\begin{remark}\label{rmk:}
We do not allow multiple edges between a pair of nodes. In other
words, even when a pair of activated nodes choose each other as their neighbors
at a specific time, we assume that one and only one edge is spanned between
those nodes. 
\end{remark}

Although the activity-driven network is relatively simple, the model can reproduce an arbitrary degree distribution~\cite{Perra2012}. Several properties of activity-driven networks have been investigated, including structural properties~\cite{Perra2012,Starnini2013b}, steady-state properties of random walks~\cite{Perra2012a,Ribeiro2013}, and spreading dynamics~\cite{Perra2012,Rizzo2014,Speidel2016a}. However, the model does not allow nodes to adapt to the state of the epidemics and, therefore, is not suitable for discussing how social distancing affects the dynamics of the spread. In the next subsection, we extend the activity-driven network by incorporating social distancing behaviors of nodes.

\subsection{Activity-driven A-SIS model}

Building upon the activity-driven network described above, we consider the scenario where nodes change their neighborhoods in response to the state of the epidemics over the network~\cite{Ogura2015i}. Specifically, we propose the \emph{activity-driven adaptive-SIS model} (\emph{activity-driven A-SIS model} for short) as follows:

\begin{figure}[tb]
\centering 
\includegraphics[clip,trim={5.5cm 6.2cm 6.5cm 3.7cm},width=.55\linewidth]{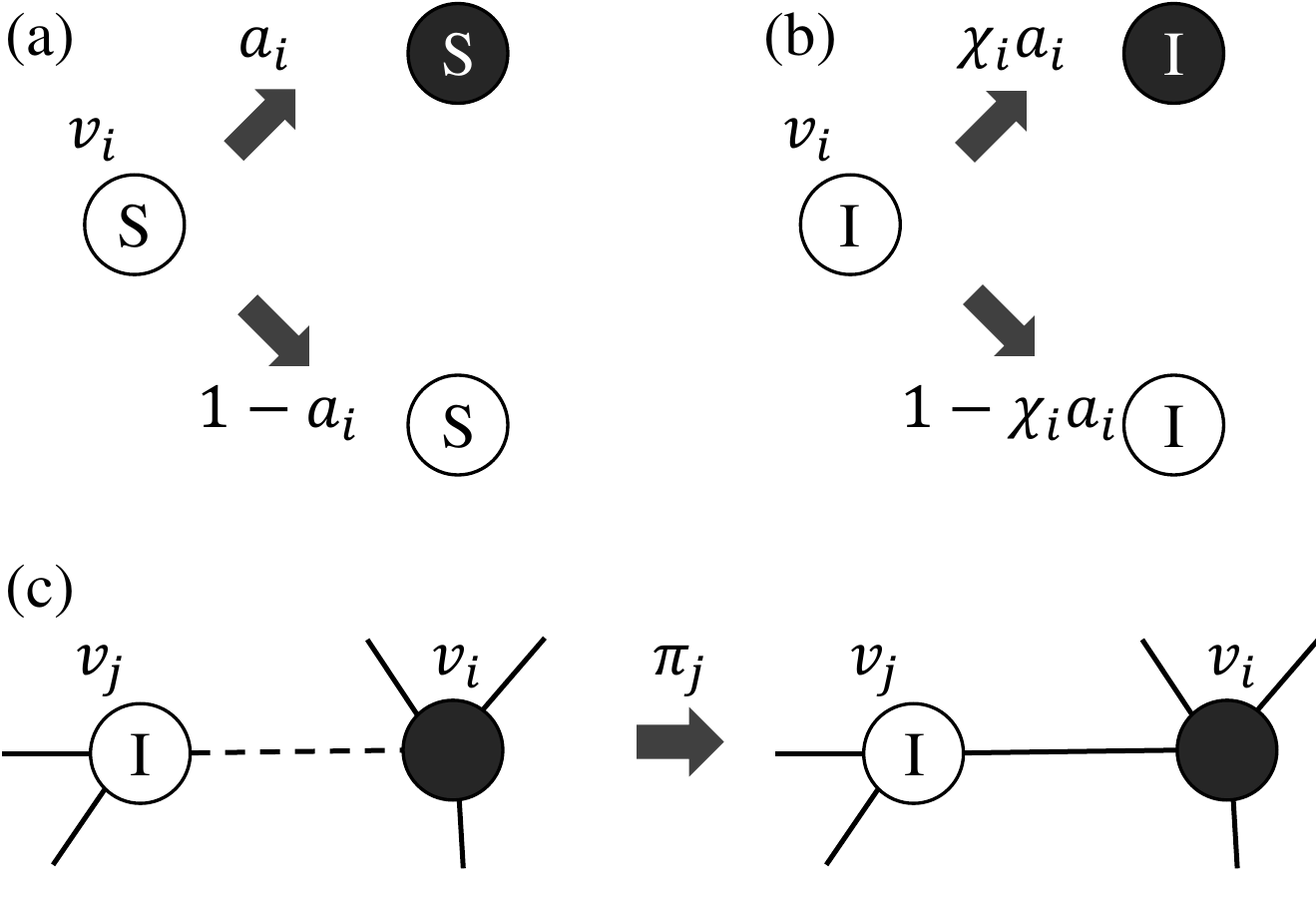} 
\caption{Adaptation of nodes in the activity-driven A-SIS model. Filled and empty circles represent active and inactive nodes, respectively. (a) A susceptible node is activated with probability~$a_i$. (b) An infected node is activated with probability~$\af_i a_i$. (c) An infected node ($v_j$) accepts an edge spanned from an activated node with probability~$\ap_j$.}\label{fig:adaptiveadn}
\end{figure}

\begin{definition}[Activity-driven A-SIS model]\label{defn:adasis}
For each $i$, let $a_i, \af_i, \ap_i \in (0, 1]$ be constants. We call $a_i$, $\af_i$, and $\ap_i$ the \emph{activity rate}, \emph{adaptation factor}, and \emph{acceptance rate} of node $v_i$, respectively. Also, let $m \leq n-1$ be a positive integer, and $\beta, \delta \in (0, 1]$ be constants. We call $\beta$ and $\delta$ the \emph{infection rate} and \emph{recovery rate}, respectively. The \emph{activity-driven A-SIS model} is defined by the following procedures (see \cref{fig:adaptiveadn} for an illustration):

\begin{enumerate}
\item At the initial time $t=0$, each node is either \emph{susceptible} or \emph{infected}.

\item\label{item:lessActivation} At each time $t = 0, 1, 2, \dotsc$, each node $v_i$ randomly becomes activated independently of other nodes with the following probability:
\begin{equation}
\Pr(\mbox{node $v_i$ becomes activated}) = 
\begin{cases}
a_i,& \mbox{if $v_i$ is susceptible,}
\\
\af_ia_i,& \mbox{if $v_i$ is infected.}
\end{cases}
\end{equation} 

\item\label{item:cutting} Each activated node, say, $v_i$, randomly and uniformly chooses $m$ other nodes independently of other activated nodes. For each chosen node, say, $v_j$, an edge $\{v_i, v_j\}$ is created with the following probability:
\begin{equation}\label{eq:acceptanceProb}
\Pr(\mbox{$\{v_i, v_j\}$ is created}) = 
\begin{cases}
1, & \mbox{if $v_j$ is susceptible,}
\\
\ap_j, & \mbox{if $v_j$ is infected. }
\end{cases}
\end{equation}
These edges are discarded at time $t+1$ (i.e., do not exist at time $t+1$). As
in \cref{rmk:}, we do not allow multiple edges between a pair of
nodes.

\item The states of nodes are updated according to the SIS model. In other words, if a node~$v_i$ is infected, it transits to the susceptible state with probability~$\delta$. If $v_i$ is susceptible, its infected neighbors infect node $v_i$ with probability~$\beta$ independently of the other infected neighbors. 

\item Steps 2--4 are repeated for each time $t \geq 0$. 
\end{enumerate}
\end{definition}

Steps~\labelcref{item:cutting,item:lessActivation} in \cref{defn:adasis} model social distancing behavior by infected nodes. In Step~\labelcref{item:lessActivation}, an infected node decreases its activity rate to avoid infecting other nodes. Step~\labelcref{item:lessActivation} can also be regarded as modeling reduction of social activity by infected nodes due to sickness. In Step~\labelcref{item:cutting}, an infected node, say, $v_j$, establishes a connection with an activated node only with probability~$\ap_j$ to avoid infecting other nodes (when $\pi_j < 1$). A susceptible node behaves in the same way as in the standard SIS model in the original activity-driven network.

\section{Decay rate}\label{sec:decayRate}

In order to quantify the persistence of epidemic infections in the activity-driven A-SIS model, in this section, we introduce the concept of decay rate of the epidemics. A direct computation of the decay rate requires computing the eigenvalues of a matrix whose size grows exponentially with the number of the nodes. To overcome this difficulty, we present an upper bound on the decay rate in terms of the eigenvalues of a $2\times 2$ matrix.

\subsection{Definition}

For each time $t$ and node $v_i$, define the random variable
\begin{equation}
x_i(t) = \begin{cases}
0, & \mbox{if $v_i$ is susceptible at time $t$,}
\\
1, & \mbox{if $v_i$ is infected at time $t$}. 
\end{cases}
\end{equation} 
Define the vector $p(t) = [p_1(t)\ \cdots \ p_n(t)]^\top$ of the infection probabilities by
\begin{equation}\label{eq:def:p_i}
p_i(t) = \Pr(\mbox{$v_i$ is infected at time $t$}). 
\end{equation}
In this paper, we measure the persistence of infection by the rate of convergence of infection probabilities to the origin. 

\begin{definition}
We define the \emph{decay
rate} of the activity-driven A-SIS model by 
\begin{equation}
\alpha  = \sup_{x(0)}\limsup_{t\geq 0} \frac{\log \norm{p(t)}}{t}, 
\end{equation}
where $\norm{\cdot}$ denotes the $\ell_1$ norm.
\end{definition}

The infection-free equilibrium, $x_1 = \cdots = x_n = 0$, is the unique absorbing state of the Markov process $\{x_1(t), \dotsc, x_n(t)\}_{t\geq 0}$. Moreover, the infection-free equilibrium is reachable from any other states by our assumption $\delta > 0$. This implies $\alpha < 1$. In fact, $\alpha$ is difficult to compute for large networks for the following reason. The Markov process~$\{x_1(t), \dotsc, x_n(t)\}_{t\geq 0}$ has $2^n$ states. Let $Q$ denote its $2^n \times 2^n$ transition probability matrix. Since the disease-free state is the unique absorbing state, it follows that
\begin{equation}
\alpha = \max \{ \abs{\lambda}: \mbox{$\lambda$ is an eigenvalue of $Q$,\ $\abs{\lambda}<1$}   \}. 
\end{equation}
Because the size of the matrix~$Q$ grows exponentially fast with respect to the number of the nodes, a direct computation of the decay rate is difficult for large networks.

\subsection{An upper bound}

We start with the following proposition, which allows us to upper-bound the infection probabilities using a linear dynamics:

\begin{proposition}\label{prop:dynamics}
Let 
\begin{equation}\label{eq:def:barm}
\bar m = m/(n-1)
\end{equation}
and, for all $i$, define the constants
\begin{equation}\label{eq:def:phipsi}
\phi_i = \bar m \af_i a_i,\quad \psi_i = \bar m \ap_i a_i. 
\end{equation}
Then, 
\begin{equation}\label{eq:upperDynamics}
p_i(t+1) 
\leq 
(1-\delta) p_i(t) + \beta \sum_{j=1}^n[1 -
(1-\psi_i)(1-\phi_j)] p_j(t)
\end{equation}
for all nodes $v_i$ and $t\geq 0$.
\end{proposition}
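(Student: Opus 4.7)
The plan is to decompose $p_i(t+1)$ via the SIS rule in Step~4 of Definition~\ref{defn:adasis} and bound each piece separately. Since at time $t+1$ node $v_i$ is either infected at $t$ and survives the recovery step, or susceptible at $t$ and becomes newly infected,
\[
p_i(t+1) = (1-\delta)\, p_i(t) + \Pr\bigl(x_i(t)=0,\, x_i(t+1)=1\bigr),
\]
so only the second term requires work.

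For that term I would use a union bound over potential infectors. By Step~4, if $v_i$ is susceptible at $t$ then the conditional probability that it becomes infected is at most $\beta$ times the number of its infected neighbors at time $t$, and therefore
\[
\Pr\bigl(x_i(t)=0,\, x_i(t+1)=1\bigr) \le \beta \sum_{j \neq i} \Pr\bigl(E_{ij}(t),\, x_j(t)=1\bigr),
\]
where $E_{ij}(t)$ denotes the event that the undirected edge $\{v_i, v_j\}$ is present at time~$t$.

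Next I would unfold $E_{ij}(t)$ using Steps~\ref{item:lessActivation} and~\ref{item:cutting}. The edge is created via the union $A \cup B$ of two events: in $A$, node $v_i$ is activated, picks $v_j$ among its $m$ uniformly chosen targets, and the link is accepted; $B$ is the analogous event with the roles of $i$ and $j$ swapped. These two events arise from independent instances of Steps~\ref{item:lessActivation}--\ref{item:cutting}, so conditionally on the pair $(x_i(t), x_j(t))$ they are independent and
\[
\Pr\bigl(E_{ij}(t) \mid x_i(t), x_j(t)\bigr) = 1 - \bigl(1 - \Pr(A \mid \cdot)\bigr)\bigl(1 - \Pr(B \mid \cdot)\bigr).
\]
Using the per-target uniform pick probability $\bar m = m/(n-1)$ (cf.\ Remark~\ref{rmk:}) together with the activation and acceptance probabilities from Steps~\ref{item:lessActivation}--\ref{item:cutting}, and the monotone inequalities $\chi_i a_i \le a_i$ and $\pi_i \le 1$, each factor can be bounded by one of the constants $\psi_i$ or $\phi_j$ in~\eqref{eq:def:phipsi}. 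Substituting back, multiplying by $\Pr(x_j(t)=1) = p_j(t)$, summing over $j$, and combining with the recovery contribution yields the claimed inequality.

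The main obstacle I anticipate is this last bounding step: because the acceptance of the link depends on the state of the chosen node, the indices on the acceptance parameters need to be aligned carefully in order to match the compact definition $\psi_i = \bar m\, \pi_i\, a_i$. All other steps are routine probabilistic bookkeeping for the underlying Markov chain.
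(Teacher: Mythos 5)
Your overall route is the same as the paper's: split $p_i(t+1)$ into the recovery term $(1-\delta)p_i(t)$ and a new-infection term, apply a union bound over potential infectors (the paper phrases this as the Weierstrass product inequality applied to the stochastic recursion \cref{eq:originalDynamics}), and then expand the edge event $E_{ij}(t)$ by inclusion--exclusion over the two conditionally independent directed creation events $\Gamma_{i\to j}^t$ and $\Gamma_{j\to i}^t$. Up to that point the two arguments coincide step for step.

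The gap is exactly at the step you yourself flag as the ``main obstacle,'' and the tools you propose there do not close it. Conditional on the joint event $\Xi_{i,j}^t$ ($v_i$ susceptible, $v_j$ infected), the two factors are determined \emph{exactly} by the model: $v_i$ activates with probability $a_i$ (not $\chi_i a_i$, since it is susceptible) and picks $v_j$ with probability $\bar m$, while $v_j$ activates with probability $\chi_j a_j$ and the susceptible target $v_i$ accepts with probability $1$, giving $\Pr(\Gamma_{j\to i}^t\mid\Xi_{i,j}^t)=\phi_j$ on the nose; the paper's proof (\cref{eq:psi_i,eq:phi_j}) identifies both conditional probabilities directly with $\psi_i$ and $\phi_j$ and uses no monotonicity whatsoever --- the only inequality invoked after the union bound is $\Pr(\Xi_{i,j}^t)\le p_j(t)$ in \cref{eq:pre:upperDynamics}. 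Your proposed inequalities $\chi_i a_i\le a_i$ and $\pi_i\le 1$ go the wrong way for this purpose: each can only replace a factor by something \emph{larger} than $\psi_i$ or $\phi_j$ (e.g.\ bounding the acceptance probability by $1$ yields $\bar m a_i\ge\psi_i$), so they cannot produce the coefficients $1-(1-\psi_i)(1-\phi_j)$ appearing in \cref{eq:upperDynamics}. You are right that the acceptance step is governed by the parameter of the infected target, which is precisely why no pointwise inequality among the $\pi$'s or $\chi$'s resolves the index alignment; the resolution must come from the exact computation of the conditional probabilities, not from bounding. A related bookkeeping slip: ``multiplying by $\Pr(x_j(t)=1)=p_j(t)$'' implicitly factors $\Pr(E_{ij}(t),x_j(t)=1)$, which need not hold; one must carry the joint event $\Xi_{i,j}^t$ through the conditioning (so that the activation and acceptance probabilities are pinned down) and only at the very end bound $\Pr(\Xi_{i,j}^t)$ by $p_j(t)$. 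As written, the decisive step of the proof is announced but not performed.
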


\begin{proof} 
By the definition of the A-SIS dynamics on the activity-driven network, the nodal states $x_1$, \dots, $x_n$ obey the following stochastic
difference equation
\begin{equation}\label{eq:originalDynamics}
x_i(t+1) = x_i(t) - x_i(t) N_{\delta}^{(i)}(t) + (1-x_i(t)) 
\left[1-\prod_{j\neq i} \left(1-a_{ij}(t) x_j(t) N_{\beta}^{(ij)}(t)\right)\right], 
\end{equation}
where 
\begin{equation}
a_{ij}(t) = \begin{cases}
1,& \mbox{if an edge $\{v_i, v_j\}$ exists at time $t$,}
\\
0, & \mbox{otherwise,}
\end{cases}
\end{equation}
and $\{N_{\delta}^{(i)}(t)\}_{t = 0}^\infty$ and $\{N_{\beta}^{(ij)}(t)\}_{t = 0}^\infty$ are independent and identically distributed random Bernoulli variables satisfying
\begin{equation}
N_{\delta}^{(i)}(t) = 
\begin{cases}
1, & \mbox{with probability~$\delta$}, 
\\
0, & \mbox{with probability~$1-\delta$}, 
\end{cases}
\end{equation}
and 
\begin{equation}
N_{\beta}^{(ij)}(t) = 
\begin{cases}
1, & \mbox{with probability~$\beta$}, 
\\
0, & \mbox{with probability~$1-\beta$}. 
\end{cases}
\end{equation}
On the right-hand side of equation~\cref{eq:originalDynamics}, the second and third terms represent recovery and transmission events, respectively (a similar equation for the case of static networks can be found in~\cite{Chakrabarti2008}).

By the Weierstrass product inequality, the third term on the right-hand side of \cref{eq:originalDynamics} is upper-bounded by $(1-x_i(t))\sum_{j=1}^n a_{ij}(t) x_{j}(t) N_{\beta}^{(ij)}(t)$. Since the expectation of $x_i(t)$ equals $p_i(t)$, taking the expectation in \cref{eq:originalDynamics} gives
\begin{equation}\label{eq:p_i(t+1)<=...}
p_i(t+1) 
\leq
p_i(t) - \delta p_i(t)  + \beta 
\sum_{j\neq i} E[(1-x_i(t))a_{ij}(t) x_j(t)], 
\end{equation}
where $E[\cdot]$ denotes the expectation of a random variable. 

Now, assume $i\neq j$. By the definition of the variables $x_i$ and $a_{ij}$, it follows  that
\begin{equation}\label{eq:E[]=...}
\begin{aligned}
&E[(1-x_i(t)) a_{ij}(t) x_j(t)]
\\
=&
\Pr(\mbox{$v_i$ and $v_j$ are adjacent, $v_i$ is susceptible, and $v_j$ is infected at time $t$})
\\
=&
\Pr(\mbox{$v_i$ and $v_j$ are adjacent at time $t$} \mid \Xi^t_{i, j})
\Pr(\Xi^t_{i, j}), 
\end{aligned}
\end{equation}
where the event $\Xi^t_{i, j}$ is defined by 
\begin{equation}
\Xi^t_{i, j} = \mbox{``$v_i$ is susceptible and $v_j$ is infected at time $t$''}.
\end{equation}
If we further define the event
\begin{equation}
\Gamma_{i\to j} ^t
= 
\mbox{``$v_i$ is activated and chooses $v_j$ as its neighbor at time $t$''}, 
\end{equation}
then, we obtain 
\begin{equation}\label{eq:adjProbability}
\begin{aligned}
&\Pr(\mbox{$v_i$ and $v_j$ are adjacent at time $t$} \mid \Xi^t_{i, j})
\\
=&
\Pr(\Gamma_{i\to j}^t  \mid \Xi^t_{i, j})
+
\Pr(\Gamma^t_{ j\to i}  \mid \Xi^t_{i, j})
-
\Pr(\Gamma_{i\to j}^t  \mid \Xi^t_{i, j})
\Pr(\Gamma^t_{ j\to i}  \mid \Xi^t_{i, j})
\\
=&
1 - \bigl[1-\Pr(\Gamma_{i\to j}^t  \mid \Xi^t_{i, j})\bigr]
\bigl[1 - \Pr(\Gamma^t_{ j\to i}  \mid \Xi^t_{i, j})\bigr]. 
\end{aligned}
\end{equation}
The event $\Gamma_{i\to j}^t$ occurs when and only when $v_i$ is activated, chooses $v_j$ as a potential neighbor, and actually connects to~$v_j$ (according to the probability given by equation~\cref{eq:acceptanceProb}). Therefore, equation~\cref{eq:def:phipsi} implies
\begin{equation}\label{eq:psi_i}
\Pr(\Gamma_{i\to j}^t \arrowvert
\Xi^t_{i, j}) = \psi_i. 
\end{equation} 
Similarly, the event $\Gamma^t_{ j\to i}$ occurs when and only when $v_j$ is activated (with probability~$\af_j a_j$ if $v_j$ is infected at time $t$) and chooses $v_i$ as one of its $m$ neighbors. Therefore, we have
\begin{equation}\label{eq:phi_j}
\Pr(\Gamma^t_{ j\to i} \arrowvert \Xi^t_{i, j})
= \phi_j.
\end{equation}
Hence, for $i\neq j$, combination of equations \cref{eq:E[]=...,eq:adjProbability,eq:psi_i,eq:phi_j} yields
\begin{equation}\label{eq:pre:upperDynamics}
\begin{aligned}
E[(1-x_i(t)) a_{ij}(t) x_j(t)] &= [1-(1-\psi_i)(1-\phi_j)]\Pr(\Xi_{i,j}^t)
\\
&\leq [1-(1-\psi_i)(1-\phi_j)]p_j(t), 
\end{aligned}
\end{equation}
where we have used the trivial inequality~$\Pr(\Xi_{i,j}^t) \leq p_j(t)$. Moreover, inequality \cref{eq:pre:upperDynamics} trivially holds true also when $i= j$. Inequalities \cref{eq:pre:upperDynamics,eq:p_i(t+1)<=...} prove \cref{eq:upperDynamics}, as desired.
\end{proof}

Using Proposition~\ref{prop:dynamics}, we obtain the following theorem that gives an explicit upper bound on the decay rate of the activity-driven A-SIS model. For a vector~$\xi \in \mathbb R^n$, introduce the notations
\begin{equation}
\av{\xi}_{\!a} = \frac{1}{n}\sum_{i=1}^n a_i\xi_i
,\quad 
\av{\xi}_{\!a^2} = \frac{1}{n}\sum_{i=1}^n a_i^2\xi_i. 
\end{equation}

\begin{theorem}\label{thm:eqanalysis}
Define
\begin{equation}\label{eq:upperBound}
{\alpha_{\rm u}} =  1 - \delta + \kappa \bar m n \beta, 
\end{equation}
where 
\begin{equation}\label{defn:anglers}
\kappa = 
\frac{\av{\chi}_{\!a} + \av{\pi}_{\!a} - \bar m\av{\chi\pi}_{\!a^2} + \sqrt{
{(\av{\chi}_{\!a} + \av{\pi}_{\!a} - \bar m\av{\chi\pi}_{\!a^2})}^2 + 4(\av{\chi\pi}_{\!a^2} - \av{\chi}_{\!a} \av{\pi}_{\!a}) 
}}{2}. 
\end{equation}
Then, the decay rate $\alpha$ satisfies
\begin{equation}
\alpha \leq \alpha_{\rm u}. 
\end{equation}
\end{theorem}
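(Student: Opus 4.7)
The plan is to convert the entrywise linear upper bound in \Cref{prop:dynamics} into a spectral bound on $\alpha$, and then compute the relevant spectral radius via a rank-$2$ factorization. Expanding $1-(1-\psi_i)(1-\phi_j) = \psi_i(1-\phi_j) + \phi_j$ lets me rewrite~\cref{eq:upperDynamics} entrywise as $p(t+1) \leq A\,p(t)$, where
\begin{equation*}
A = (1-\delta) I + \beta M, \qquad M_{ij} = \psi_i(1-\phi_j)+\phi_j.
\end{equation*}
Since $A$ has nonnegative entries, iterating yields $p(t)\leq A^t p(0)$ componentwise, so $\norm{p(t)} \leq \norm{A^t p(0)}$. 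Gelfand's formula then reduces the problem to showing $\rho(A) \leq \alpha_{\rm u}$, which in turn is equivalent to $\rho(M) \leq \bar m n\kappa$.

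The key observation is that $M$ has rank at most $2$: setting
\begin{equation*}
U = \begin{bmatrix}\psi_1 & 1 \\ \vdots & \vdots \\ \psi_n & 1\end{bmatrix}
\quad\text{and}\quad
V = \begin{bmatrix}1-\phi_1 & \phi_1 \\ \vdots & \vdots \\ 1-\phi_n & \phi_n\end{bmatrix},
\end{equation*}
we have $M = UV^\top$. Because the nonzero eigenvalues of $UV^\top$ coincide with those of $V^\top U$, the spectral radius $\rho(M)$ equals the spectral radius of the $2\times 2$ matrix $V^\top U$. Substituting $\psi_j = \bar m\pi_j a_j$ and $\phi_j = \bar m\chi_j a_j$, and collecting the resulting sums into the averages $\av{\cdot}_{\!a}$ and $\av{\cdot}_{\!a^2}$, gives
\begin{equation*}
\mathrm{tr}(V^\top U) = \bar m n\bigl(\av{\chi}_{\!a} + \av{\pi}_{\!a} - \bar m\av{\chi\pi}_{\!a^2}\bigr),
\end{equation*}
\begin{equation*}
\det(V^\top U) = \bar m^2 n^2\bigl(\av{\chi}_{\!a}\av{\pi}_{\!a} - \av{\chi\pi}_{\!a^2}\bigr).
\end{equation*}
The larger root of the resulting quadratic $\lambda^2 - \mathrm{tr}\,\lambda + \det = 0$ is then exactly $\bar m n\kappa$ with $\kappa$ as in~\cref{defn:anglers}, yielding the claimed bound.

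The main obstacle I anticipate is the algebraic bookkeeping needed to rewrite $\sum_j\psi_j$, $\sum_j\phi_j$, and $\sum_j\phi_j\psi_j$ cleanly in bracket notation and to verify that the two determinants and the trace combine into the form stated for $\kappa$. A smaller subtlety is confirming that $\rho(M)$ corresponds to the \emph{larger} of the two eigenvalues of $V^\top U$: since $\rho(M)$ is a real nonnegative eigenvalue of $M$ by Perron--Frobenius and the trace of $V^\top U$ is real, both eigenvalues of that $2\times 2$ matrix must be real, and $\rho(M)$ is the larger one. The only genuinely insightful step is the factorization $M = UV^\top$; the remainder is routine.
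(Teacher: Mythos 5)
Your proposal is correct and follows essentially the same route as the paper: both pass from \cref{prop:dynamics} to the nonnegative comparison system $p(t+1)\leq \mathcal Fp(t)$, reduce $\alpha\leq\rho(\mathcal F)=1-\delta+\beta\rho(\mathcal A)$, and exploit the rank-$2$ structure of $\mathcal A$ to land on a $2\times2$ eigenvalue problem whose larger root is $\bar mn\kappa$. The only difference is cosmetic: where you factor $\mathcal A=UV^\top$ and pass to $V^\top U$, the paper restricts $\mathcal A$ to its image $\operatorname{span}\{\onev,\onev-\psi\}$ to obtain a similar $2\times2$ matrix with the same trace and determinant, and it certifies realness of the roots by a direct inequality rather than your (equally valid) Perron--Frobenius argument.
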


\begin{proof}
Inequality \cref{eq:upperDynamics} implies that there exists a nonnegative variable $\epsilon_i(t)$ such that
\begin{equation}\label{eq:pDynamics+epsilon}
p_i(t+1) 
=
(1-\delta) p_i(t) + \beta \sum_{j=1}^n\left(1 -
(1-\psi_i)(1-\phi_j)\right) p_j(t) - \epsilon_i(t)
\end{equation}
for all nodes $v_i$ and $t\geq 0$. Let us define the vectors $\epsilon(t) = [\epsilon(t)\ \cdots\ \epsilon_n(t)]^\top$, $\phi = [\phi_1\ \cdots \ \phi_n]^\top$, and $\psi = [\psi_1\ \cdots \ \psi_n]^\top$. Equation~\cref{eq:pDynamics+epsilon} is rewritten as 
\begin{equation}\label{eq:originalDynamicsp}
p(t+1) 
=
\mathcal F p(t)- \epsilon(t),
\end{equation}
where 
\begin{equation}\label{eq:def:calF}
\mathcal F =  (1-\delta) I + \beta \left[\onev\onev^\top - (\onev - \psi)( \onev - \phi)^\top\right], 
\end{equation}
$\onev$ denotes the $n$-dimensional vector whose entries are all one, and $I$ denotes the $n\times n$ identity matrix. Since $\mathcal F$ and $\epsilon(t)$ are nonnegative entrywise, equation~\cref{eq:originalDynamicsp} leads to $p(t) = \mathcal F^t p(0) - \sum_{\ell=0}^t \mathcal F^{k-\ell} \epsilon(\ell) \leq \mathcal F^t p(0)$. This inequality shows 
\begin{equation}\label{eq:r<=rho(F)}
\alpha \leq \rho(\mathcal F), 
\end{equation}
where $\rho(\cdot)$ denotes the spectral radius of a matrix. 

\begin{figure}[tb]
\centering
\includegraphics[width=.475\linewidth]{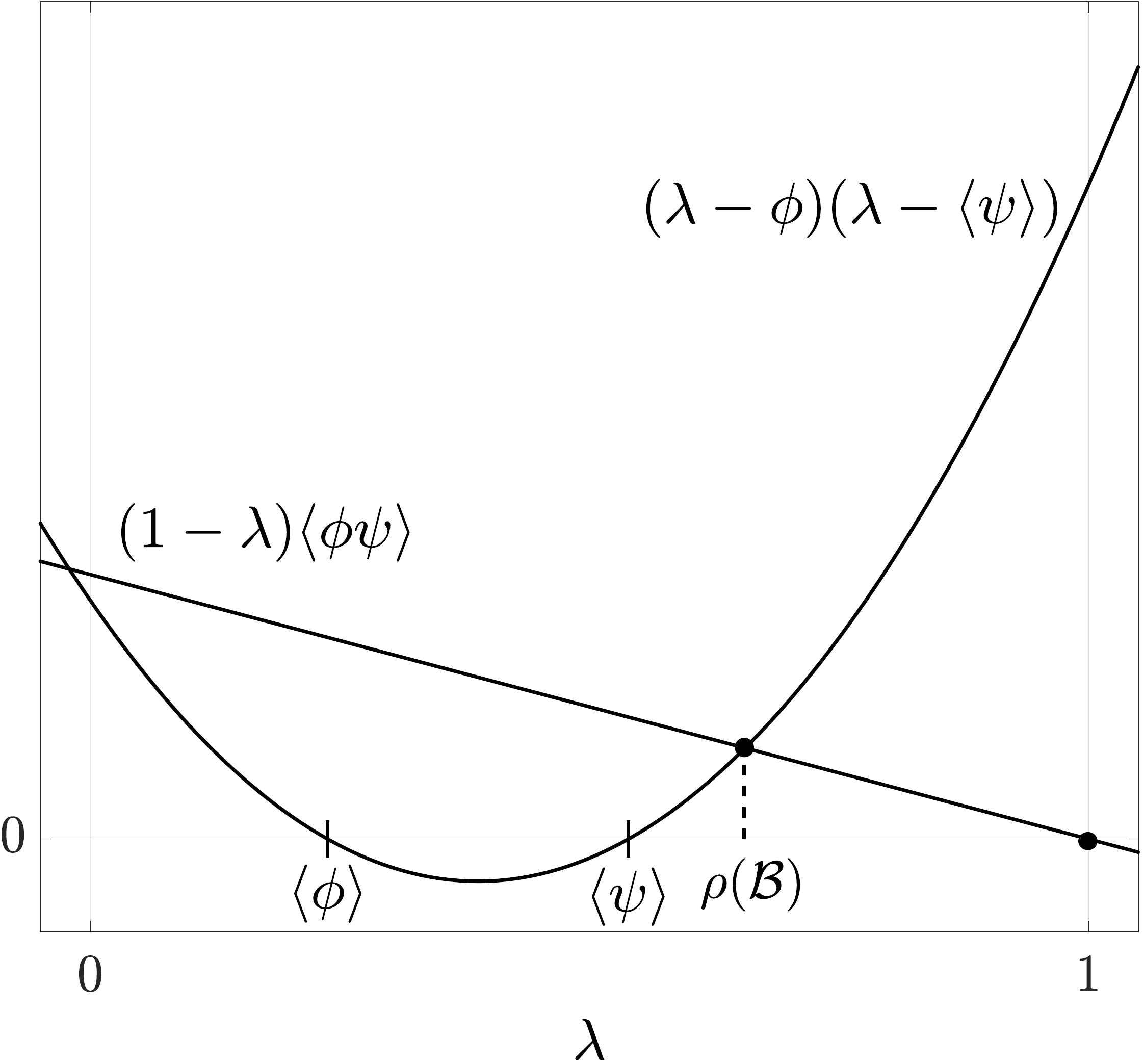}
\caption{Characteristic equation \cref{eq:charEquation}}
\label{fig:illustration}
\end{figure}

Now, we evaluate $\rho(\mathcal F)$. Equation~\cref{eq:def:calF} is rewritten as $\mathcal F = (1-\delta) I + \beta \mathcal A $, where $\mathcal A = \onev \onev^\top - (\onev - \psi)( \onev - \phi)^\top$. Since $\mathcal A$ is nonnegative entrywise and $1-\delta \geq 0$, we obtain
\begin{equation}\label{eq:rho(F)=}
\rho(\mathcal F) = 1 -\delta + \beta \rho(\mathcal A). 
\end{equation}
Furthermore, as we prove in \cref{app:rhoA}, it holds that 
\begin{equation}\label{eq:rho(A)=rho(nB)}
\rho(\mathcal A) = \rho 
(n\mathcal B), 
\end{equation}
where
\begin{gather}\label{eq:def:langphipsirnv}
\mathcal B = \begin{bmatrix}
1 & 1 - \av \psi
\\
-1+ \av \phi& 
-1+\av \phi + \av \psi - \av {\phi\psi}
\end{bmatrix}, 
\\
\label{eq:def:langphipsirang}
\langle\phi \rangle = \frac 1 n \sum_{i=1}^n \phi_i
, \quad 
\langle\psi \rangle = \frac 1 n \sum_{i=1}^n \psi_i
, \quad 
\langle\phi\psi \rangle = \frac 1 n \sum_{i=1}^n \phi_i\psi_i.  
\end{gather}
As shown in \cref{fig:illustration}, matrix $\mathcal B$ has the characteristic equation 
\begin{equation}\label{eq:charEquation}
(1-\lambda) \av{\phi\psi} = ( \lambda - \av \phi)( \lambda - \av \psi) 
\end{equation}
having the roots 
\begin{equation}\label{eq:roots}
\lambda = \frac{\av \phi + \av \psi - \av{\phi\psi} \pm \sqrt{
{(\av \phi + \av \psi - \av{\phi\psi})}^2 + 4(\av{\phi\psi} - \av \phi \av \psi) 
}}{2}. 
\end{equation}
The roots are real because 
\begin{equation}\label{eq:suportRealness}
{(\av \phi + \av \psi - \av{\phi\psi})}^2 + 4(\av{\phi\psi} - \av \phi \av \psi) 
\geq 
{(\av \psi - \av \phi)}^2 + \av{\phi\psi}^2 
> 0, 
\end{equation}
which follows from the trivial inequality~$4\av{\phi\psi} \geq 2\av \phi \av{\phi\psi} + 2\av \psi \av{\phi\psi}$. Therefore, by substituting  equation~\cref{eq:def:phipsi} into equation~\cref{eq:roots}, we obtain $\rho(\mathcal B) = \kappa \bar m $. This equation and \cref{eq:r<=rho(F),eq:rho(A)=rho(nB),eq:rho(F)=} complete the proof of the \lcnamecref{thm:eqanalysis}.
\end{proof}

The following corollary shows that an epidemic will become extinct more quickly when the adaptation factor and acceptance rate are less correlated in a weighted sense. 

\begin{corollary}\label{cor:sensitivity}
Let $(\chi, \pi)$ and $(\chi', \pi')$ be pairs of adaptation factors and acceptance rates of nodes, and denote the corresponding upper-bounds on the decay rates by $\alpha_{\rm u}$ and $\alpha_{\rm u}'$, respectively. If $\av{\chi}_{\!a} =  \av{\chi'}_{\!a}$, $\av{\pi}_{\!a} =  \av{\pi'}_{\!a}$, and $\av{\chi\pi}_{\!a^2} <  \av{\chi'\pi'}_{\!a^2}$, then 
\begin{equation}
\alpha_{\rm u} < \alpha_{\rm u}'. 
\end{equation}
\end{corollary}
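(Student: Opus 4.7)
The plan is to reduce the corollary to a one-variable monotonicity statement. From \cref{eq:upperBound}, $\alpha_{\rm u}$ is affine in $\kappa$ with positive slope $\bar m n \beta$, and the right-hand side of \cref{defn:anglers} depends on $(\chi,\pi)$ only through $\av{\chi}_{\!a}$, $\av{\pi}_{\!a}$, and $c := \av{\chi\pi}_{\!a^2}$. Under the hypotheses of the corollary, the first two of these quantities coincide for the two pairs, so it suffices to prove that $\kappa$, regarded as a function of $c$ alone, is strictly increasing.

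To carry this out, I would work with the characteristic equation \cref{eq:charEquation} for the $2\times 2$ matrix $\mathcal B$ rather than differentiate the closed-form expression in \cref{defn:anglers} directly. Since $\kappa = \rho(\mathcal B)/\bar m = \lambda_+/\bar m$ and since $\av{\phi\psi} = \bar m^2 c$ while $\av{\phi}$ and $\av{\psi}$ remain fixed, implicit differentiation of $(1-\lambda_+)\av{\phi\psi} = (\lambda_+-\av{\phi})(\lambda_+-\av{\psi})$ with respect to $c$, combined with Vieta's identity $\lambda_+ + \lambda_- = \av{\phi} + \av{\psi} - \av{\phi\psi}$, yields
\begin{equation*}
\frac{d\lambda_+}{dc} = \frac{\bar m^2(1-\lambda_+)}{2\lambda_+ - \av{\phi} - \av{\psi} + \av{\phi\psi}} = \frac{\bar m^2(1-\lambda_+)}{\lambda_+-\lambda_-}.
\end{equation*}
Strict positivity of this derivative then reduces to verifying the two estimates $\lambda_- < \lambda_+$ and $\lambda_+ < 1$.

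The strict separation $\lambda_- < \lambda_+$ is inherited directly from the strict discriminant inequality already proved in \cref{eq:suportRealness}. For the bound $\lambda_+ < 1$, I would evaluate the characteristic polynomial $P(\lambda) := (\lambda-\av{\phi})(\lambda-\av{\psi}) - (1-\lambda)\av{\phi\psi}$ at $\lambda = 1$, obtaining $P(1) = (1-\av{\phi})(1-\av{\psi}) > 0$ under the mild non-degeneracy $\av{\phi}, \av{\psi} < 1$ (which follows from $\phi_i = \bar m \chi_i a_i \le 1$ and $\psi_i = \bar m \pi_i a_i \le 1$, strictly in the generic case). Since $P$ is a convex parabola, this sign places $\lambda = 1$ outside the interval $[\lambda_-, \lambda_+]$, so either $\lambda_+ < 1$ or $\lambda_- > 1$; the latter would force $\lambda_+ + \lambda_- > 2$, contradicting
\begin{equation*}
\lambda_+ + \lambda_- = \av{\phi} + \av{\psi} - \av{\phi\psi} = \frac{1}{n}\sum_{i=1}^n \bigl[1-(1-\phi_i)(1-\psi_i)\bigr] \le 1.
\end{equation*}

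The step I expect to be the principal obstacle is the bound $\lambda_+ < 1$: the implicit differentiation and Vieta rearrangement are mechanical, and the discriminant positivity is borrowed wholesale from \cref{thm:eqanalysis}, but establishing $\lambda_+ < 1$ requires the slightly indirect combination of $P(1) > 0$ with the upper bound on $\lambda_+ + \lambda_-$. Once these pieces are assembled, $d\lambda_+/dc > 0$ strictly, so $\kappa(c) < \kappa(c')$ whenever $c < c'$, and therefore $\alpha_{\rm u} < \alpha_{\rm u}'$ follows immediately.
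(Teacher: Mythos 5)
Your proof is correct, and at its core it follows the same route as the paper: both arguments reduce the corollary to the claim that the larger root of the characteristic equation \cref{eq:charEquation} is strictly increasing in $\av{\phi\psi}$ when $\av{\phi}$ and $\av{\psi}$ are held fixed. The difference lies in how that monotonicity is justified. The paper simply reads it off the picture in \cref{fig:illustration} (the line $y=(1-\lambda)\av{\phi\psi}$ pivots about the point $(1,0)$ as $\av{\phi\psi}$ grows, pushing its right intersection with the parabola $y=(\lambda-\av{\phi})(\lambda-\av{\psi})$ to the right), whereas you make the step analytic via implicit differentiation, which forces you to isolate and prove the two facts the figure silently encodes, namely $\lambda_-<\lambda_+$ and $\lambda_+<1$. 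Your handling of both is sound, and your derivative formula checks out against Vieta's identities for the expanded polynomial $\lambda^2-(\av{\phi}+\av{\psi}-\av{\phi\psi})\lambda+(\av{\phi}\av{\psi}-\av{\phi\psi})$. The only loose end is that you label $\av{\phi}<1$ and $\av{\psi}<1$ as a ``generic'' non-degeneracy rather than verifying it; it can in fact be closed unconditionally. Indeed, $\av{\phi}=1$ forces $\bar m=1$ and $\chi_ia_i=1$, hence $a_i=\chi_i=1$ for every $i$; the hypothesis $\av{\chi}_{\!a}=\av{\chi'}_{\!a}$ then forces $\chi_i'=1$ for all $i$ as well, so that $\av{\chi\pi}_{\!a^2}=\av{\pi}_{\!a}$ and $\av{\chi'\pi'}_{\!a^2}=\av{\pi'}_{\!a}$, and the hypothesis $\av{\pi}_{\!a}=\av{\pi'}_{\!a}$ then contradicts the assumed strict inequality $\av{\chi\pi}_{\!a^2}<\av{\chi'\pi'}_{\!a^2}$ (the case $\av{\psi}=1$ is symmetric). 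Since the degenerate configuration is thus excluded by the corollary's own hypotheses, your argument is complete, and it is in fact more fully justified than the published proof, whose key monotonicity step is never argued beyond the appeal to the figure.
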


\begin{proof}
By the proof of \cref{thm:eqanalysis}, we have $\alpha_{\rm u} = 1 - \delta + \rho(\mathcal B) n\beta$. \cref{fig:illustration} implies that $\rho(\mathcal B)$ increases with $\langle \phi \psi \rangle$ when $\av \phi$ and $\av \psi$ are fixed. This proves the claim of the corollary because $\av{\chi\pi}_{\!a^2} = \langle \phi \psi \rangle/\bar m^2$, $\av{\chi}_{\!a} = \av \phi/\bar m$, and $\av{\pi}_{\!a} = \av \psi/\bar m$.
\end{proof}

As another corollary of \cref{thm:eqanalysis}, we also present an upper bound on the decay rate when nodes do not adapt to the states of the nodes. 

\begin{corollary}\label{cor:SIS}
Assume $\chi_i = \pi_i = 1$ for all $i$. Let 
\begin{equation}
\kappa_0 = 
\frac{2\av{a} - \bar m\av{a^2} + \sqrt{
4\av{a^2} - 4 \bar m \av{a} \av{a^2} +\bar m^2\av{a^2}^2 
}}{2}. 
\end{equation}
Then, the decay rate of the activity-driven SIS model is at most 
$1 - \delta + \kappa_0 \bar mn \beta$. 
\end{corollary}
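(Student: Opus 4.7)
The plan is to obtain \cref{cor:SIS} as a direct specialization of \cref{thm:eqanalysis}. When $\chi_i = \pi_i = 1$ for all $i$, the adaptation mechanisms in Steps~\labelcref{item:lessActivation,item:cutting} of \cref{defn:adasis} become trivial, and the activity-driven A-SIS model degenerates to the standard SIS process on the original activity-driven network of \cref{eq:ADM}. Hence it suffices to substitute $\chi_i = \pi_i = 1$ into \cref{defn:anglers} and verify that the resulting $\kappa$ coincides with $\kappa_0$; the bound $\alpha \leq 1 - \delta + \kappa_0 \bar m n \beta$ then follows immediately from~\cref{eq:upperBound}.

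The first step is to evaluate the weighted averages appearing in $\kappa$. Since $\chi_i\pi_i = 1$, the definitions give
\begin{equation*}
\av{\chi}_{\!a} = \av{\pi}_{\!a} = \frac{1}{n}\sum_{i=1}^n a_i = \av{a},\qquad \av{\chi\pi}_{\!a^2} = \frac{1}{n}\sum_{i=1}^n a_i^2 = \av{a^2}.
\end{equation*}
Plugging these into the numerator of $\kappa$ already produces the non-radical part $2\av{a} - \bar m\av{a^2}$ of $\kappa_0$, so all that remains is to match the discriminant.

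The second step is a short algebraic simplification. Expanding the square inside the radical,
\begin{equation*}
(2\av{a} - \bar m\av{a^2})^2 + 4(\av{a^2} - \av{a}^2) = 4\av{a}^2 - 4\bar m\av{a}\av{a^2} + \bar m^2 \av{a^2}^2 + 4\av{a^2} - 4\av{a}^2,
\end{equation*}
and the $4\av{a}^2$ terms cancel, leaving precisely $4\av{a^2} - 4\bar m\av{a}\av{a^2} + \bar m^2 \av{a^2}^2$. This matches the radicand defining $\kappa_0$, so $\kappa = \kappa_0$ under the assumption $\chi_i = \pi_i = 1$, and \cref{thm:eqanalysis} yields the claimed bound.

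There is no real obstacle here, as the corollary is essentially a worked instance of the main theorem. The only thing to handle with care is the cancellation above: without noticing that the $\av{a}^2$ contribution disappears, one might mistakenly believe that the specialized formula should still contain a mean-squared term, and the clean form of $\kappa_0$ would be obscured.
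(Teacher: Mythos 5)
Your proposal is correct and is exactly the argument the paper intends (the corollary is stated without proof as a direct specialization of \cref{thm:eqanalysis}): substituting $\chi_i=\pi_i=1$ gives $\av{\chi}_{\!a}=\av{\pi}_{\!a}=\av{a}$ and $\av{\chi\pi}_{\!a^2}=\av{a^2}$, and your algebraic simplification of the radicand, with the cancellation of the $4\av{a}^2$ terms, correctly reduces $\kappa$ to $\kappa_0$.
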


\begin{remark}
If $m$ is sufficiently small compared with $n$ and, furthermore, $n$ is sufficiently large (as implicitly assumed in~\cite{Perra2012}), the upper-bound in \cref{cor:SIS} reduces to $1 - \delta + (\av a + \sqrt{\av{a^2}})m\beta$, which coincides with the result in~\cite{Perra2012}.
\end{remark}

\section{Cost-optimal adaptations}\label{sec:optimizaiton}

In this section, we study the problem of eradicating an epidemic outbreak by distributing resources to nodes in the activity-driven network. We consider the situation in which there is a budget that can be invested on strengthening the preventative behaviors of each node. We show that the optimal budget allocation is found using geometric programs, which can be efficiently solved in polynomial time.

\subsection{Problem statement}

We consider an optimal resource allocation problem in which we can tune the adaptation factors and acceptance rates of nodes. Assume that, to set the adaptation factor of node $v_i$ to $\af_i$, we need to pay a cost~$\costAF_i(\af_i)$. Similarly we need to pay a cost~$\costAP_i(\ap_i)$ to set the acceptance rate of node $v_i$ to $\ap_i$. The total cost for tuning the parameters to the values $\af_1$, \dots, $\af_n$, $\ap_1$, \dots, $\ap_n$ equals
\begin{equation}
C = \sum_{i=1}^n (\costAF_i(\af_i) + \costAP_i(\ap_i)). 
\end{equation}
Throughout this section, we assume the following box constraints: 
\begin{equation}\label{eq:boxConstraints} 
0 < \ubar \af_i \leq \af_i \leq \bar \af_i
,\quad
 0< \ubar \ap_i \leq \ap_i \leq \bar \ap_i.   
\end{equation}

In this paper, we consider the following two types of optimal resource allocation problems. 

\begin{problem}[Cost-constrained optimal resource allocation]\label{prb:}
Given a total budget $\bar C$, find the adaptation rates and acceptance rates that minimize $\alpha_{\rm u}$ while satisfying the budget constraint
\begin{equation}\label{eq:budgetConstraint}
C \leq \bar C. 
\end{equation}
\end{problem}

\begin{problem}[Performance-constrained optimal resource allocation]\label{prb:pc}
Given a largest tolerable decay rate $\bar \alpha$, find the adaptation rates and acceptance rates that minimize the total cost $C$ while satisfying the performance  constraint
\begin{equation}\label{eq:performanceConstraint}
\alpha_{\rm u}\leq \bar \alpha. 
\end{equation}
\end{problem}

\subsection{Cost-constrained optimal resource allocation}

In this subsection, we show that \cref{prb:} can be transformed to a geometric program~\cite{Boyd2007}, which can be efficiently solved. Before stating our main results, we give a brief review of geometric programs. Let $x_1$, \dots, $x_n$ denote positive variables and define $x = (x_1, \dotsc, x_n)$. We say that a real function~$q(x)$ is a \emph{monomial} if there exist $c \geq 0$ and $a_1, \dotsc, a_n \in \mathbb{R}$ such that $q(x) = c x_{\mathstrut 1}^{a_{1}} \dotsm x_{\mathstrut n}^{a_n}$. Also, we say that a function~$r(x)$ is a \emph{posynomial} if it is a sum of monomials of~$x$ (we point the readers to~\cite{Boyd2007} for more details). Given a collection of posynomials $r_0(x)$, \dots, $r_k(x)$ and monomials $q_1(x)$, \dots, $q_\ell(x)$, the optimization problem
\begin{equation} 
\begin{aligned}
\minimize\ \ \ \, 
&
r_0(x)
\\
\st\ \ 
&
r_i(x)\leq 1,\quad i=1, \dotsc, k, 
\\
&
q_j(x) = 1,\quad j=1, \dotsc, \ell, 
\end{aligned}
\end{equation}
is called a \emph{geometric program}. A constraint of the form $r(x)\leq 1$ with $r(x)$ being a posynomial is called a \emph{posynomial constraint}.  Although geometric programs are not convex, they can be efficiently converted into equivalent convex optimization problems~\cite{Boyd2007}.

We assume that the cost functions~$\costAF_i$ and $\costAP_i$ decrease with the adaptation factor~$\af_i$ and acceptance rate~$\ap_i$, respectively. This assumption implies a natural situation in which it is more costly to suppress $\chi_i$ and $\pi_i$ to a larger extent. We also expect diminishing returns with increasing investments~\cite{Reluga2010}. For a fixed $\epsilon > 0$, let $\Delta f_i(\chi_i) = f_i(\chi_i-\epsilon) - f_i(\chi_i)$ denote the cost for improving the adaptation factor from $\chi_i$ to~\mbox{$\chi_i -\epsilon$}. Then,  diminishing returns imply that $\Delta f_i$  decreases with $\chi_i$, which implies the convexity of $f_i$. Therefore, we place the following assumption on the cost functions.
\begin{assumption}\label{assm:}
For all $i \in \{1, \dotsc, n\}$, decompose $f_i$ and $g_i$ into the differences of their positive and negative parts as follows:
\begin{align}
f_i &= f_i^+ - f_i^-, 
\\
g_i &= g_i^+ - g_i^-, 
\end{align}
where $f_i^+ = \max(f, 0)$, $f_i^- = \max(-f, 0)$, $g_i^+ = \max(g, 0)$, and $g_i^- = \max(-g, 0)$. Then, $f_i^+$ and $g_i^+$ are posynomials, and $f_i^-$ and $g_i^-$ are constants.
\end{assumption}

\cref{assm:} allows us to use any cost functions that are convex on the log-log scale because any function convex on the log-log scale can be approximated by a posynomial with an arbitrary accuracy~\cite[Section~8]{Boyd2007}. We now state our first main result in this section, which allows us to efficiently solve \cref{prb:} via geometric programming:

\begin{theorem}\label{thm:bc}
Let $\af_i^\star$ and $\ap_i^\star$ be the solutions of the following optimization problem: 
\begin{subequations}\label{eq:opt}
\begin{align}
\minimize_{\tilde \lambda,\, {\af_{i}},\, \ap_i,\,\zeta,\,\eta > 0}\ \ & 1/\tilde \lambda
\\
\st\ \ \ &\mbox{\cref{eq:boxConstraints},}
\\
& \bar m^2 \tilde \lambda \av{\af \ap}_{\!a^2}\zeta\eta  \leq 1, 
\label{eq:quadConstraint}
\\
& \zeta^{-1} + \tilde \lambda + \bar m \av{\af}_{\!a}\leq 1,  \label{eq:zetaConstraint}
\\
& \eta^{-1} + \tilde \lambda +  \bar m \av{\ap }_{\!a}\leq 1,  \label{eq:etaConstraint}
\\
&\sum_{i=1}^n (\costAF_i^+(\af_i) + \costAP_i^+(\ap_i)) \leq \bar C + \sum_{i=1}^n (\costAF_{i}^- + \costAP_{i}^-).\label{eq:tildeCost<barC}
\end{align}
\end{subequations}
Then, the adaptation factor $\af_i = \af_i^\star$ and the acceptance rate $\ap_i = \ap_i^\star$ solve \cref{prb:}. Moreover, under \cref{assm:}, the optimization problem \cref{eq:opt} is a geometric program.
\end{theorem}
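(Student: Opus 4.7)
The plan is to reformulate \cref{prb:} as the optimization~\eqref{eq:opt} by expressing $\rho(\mathcal B) = \kappa\bar m$ through the characteristic equation~\eqref{eq:charEquation}, introducing two slack variables to handle its quadratic structure, and then verifying the posynomial/monomial form under \cref{assm:}.

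Because $\alpha_{\rm u} = 1-\delta+\kappa\bar m n\beta$ is strictly increasing in $\lambda^\star := \rho(\mathcal B)$ and the budget depends only on $(\af,\ap)$, minimizing $\alpha_{\rm u}$ is equivalent to minimizing $\lambda^\star(\af,\ap)$. I would first show that $\lambda^\star<1$: the quadratic $q(\lambda) := (\lambda-\av\phi)(\lambda-\av\psi)-(1-\lambda)\av{\phi\psi}$ satisfies $q(1) = (1-\av\phi)(1-\av\psi)>0$; since its two roots are real (by the proof of \cref{thm:eqanalysis}) and their product $\av\phi\av\psi-\av{\phi\psi}\le\av\phi\av\psi\le 1$ rules out the possibility of both roots exceeding $1$, both roots must lie below $1$. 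Setting $\tilde\lambda = 1-\lambda$, the inequality $q(\lambda)\ge 0$ combined with the positivity requirements $1-\tilde\lambda-\av\phi>0$ and $1-\tilde\lambda-\av\psi>0$ (which, since $\av\phi$ and $\av\psi$ lie between the two roots, force $\lambda\ge\lambda^\star$) becomes
\begin{equation*}
\tilde\lambda\av{\phi\psi}\le(1-\tilde\lambda-\av\phi)(1-\tilde\lambda-\av\psi),
\end{equation*}
which is feasible in $\tilde\lambda>0$ precisely for $0<\tilde\lambda\le 1-\lambda^\star(\af,\ap)$. Therefore, for each fixed $(\af,\ap)$, minimizing $1/\tilde\lambda$ drives $\tilde\lambda\uparrow 1-\lambda^\star(\af,\ap)$, and the outer minimization over $(\af,\ap)$ selects a minimizer of $\lambda^\star$ and hence of $\alpha_{\rm u}$.

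To bring this into posynomial form, I would introduce positive slack variables $\zeta,\eta$ with $\zeta^{-1}\le 1-\tilde\lambda-\av\phi$ and $\eta^{-1}\le 1-\tilde\lambda-\av\psi$ (which implicitly encode the required positivity) and replace the quadratic inequality by $\tilde\lambda\av{\phi\psi}\zeta\eta\le 1$. This relaxation is lossless because at any optimum the two slack constraints can be taken tight: lowering $\zeta$ or $\eta$ toward its lower bound only loosens $\tilde\lambda\av{\phi\psi}\zeta\eta\le 1$, while leaving the objective and the budget untouched. Substituting $\av\phi=\bar m\av{\af}_{\!a}$, $\av\psi=\bar m\av{\ap}_{\!a}$, and $\av{\phi\psi}=\bar m^2\av{\af\ap}_{\!a^2}$ then yields \eqref{eq:zetaConstraint}, \eqref{eq:etaConstraint}, and \eqref{eq:quadConstraint}. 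The budget~\eqref{eq:budgetConstraint} becomes \eqref{eq:tildeCost<barC} by the decomposition $\costAF_i = \costAF_i^+-\costAF_i^-$, $\costAP_i = \costAP_i^+-\costAP_i^-$ of \cref{assm:} and transferring the constant negative parts to the right-hand side.

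Finally, I would verify the geometric-programming structure under \cref{assm:}: $1/\tilde\lambda$ is a monomial; \eqref{eq:boxConstraints} rearranges to monomial inequalities such as $\ubar\af_i\af_i^{-1}\le 1$ and $\bar\af_i^{-1}\af_i\le 1$, and likewise for $\ap_i$; \eqref{eq:quadConstraint} expands to $\sum_i\tfrac{\bar m^2 a_i^2}{n}\tilde\lambda\af_i\ap_i\zeta\eta\le 1$, a sum of monomials; \eqref{eq:zetaConstraint} and \eqref{eq:etaConstraint} are posynomial once $\av{\af}_{\!a}$ and $\av{\ap}_{\!a}$ are expanded as sums of monomials in $\af_i$ (resp.\ $\ap_i$); and \eqref{eq:tildeCost<barC}, after dividing through by the positive constant on its right-hand side, is posynomial because each $\costAF_i^+$ and $\costAP_i^+$ is a posynomial. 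The step I expect to require the most care is the slack-variable equivalence: one must verify that relaxing the natural equalities $\zeta^{-1} = 1-\tilde\lambda-\av\phi$ and $\eta^{-1} = 1-\tilde\lambda-\av\psi$ to inequalities does not enlarge the achievable set of objective values, so that the $(\af^\star,\ap^\star)$ extracted from~\eqref{eq:opt} indeed minimizes $\alpha_{\rm u}$ in \cref{prb:}.
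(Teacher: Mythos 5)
Your proposal is correct and follows essentially the same route as the paper: the inequality characterization of $\rho(\mathcal B)\le\lambda$ (the paper's \cref{lem:ineqanalysis}), the introduction of slack variables $\zeta,\eta$, the lossless relaxation of the equalities to inequalities by tightening $\zeta,\eta$ at an optimum, the substitution $\tilde\lambda=1-\lambda$, and the posynomial verification. The only difference is cosmetic: you justify the spectral-radius characterization and the bound $\rho(\mathcal B)<1$ algebraically, whereas the paper appeals to \cref{fig:illustration}.
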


To prove this theorem, we show an alternative characterization of the decay rate in terms of inequalities. 

\begin{lemma}\label{lem:ineqanalysis}
Let $\lambda > 0$. The upper bound $\alpha_{\rm u}$ satisfies 
\begin{equation}\label{eq:ineqChar}
\alpha_{\rm u} \leq 1-\delta + \lambda n \beta
\end{equation}
if and only if
\begin{align}
(1-\lambda) \av{\phi\psi} &\leq ( \lambda - \av \phi)( \lambda - \av \psi), 
\label{eq:lambdaIneq1}
\\
\av{\phi} &< \lambda,
\label{eq:lambdaIneq2}
\\
\av{\psi} &< \lambda.
\label{eq:lambdaIneq3}
\end{align}
\end{lemma}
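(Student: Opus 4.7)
The plan is to recast \cref{eq:ineqChar} as a statement about the larger root of the characteristic polynomial of $\mathcal B$, and then use elementary quadratic analysis. By the derivation of $\alpha_{\rm u}$ in \cref{thm:eqanalysis}, we have $\alpha_{\rm u} = 1 - \delta + \rho(\mathcal B) n \beta$, so \cref{eq:ineqChar} is equivalent to $\rho(\mathcal B) \leq \lambda$. Writing $F(\mu) := (\mu - \av{\phi})(\mu - \av{\psi}) - (1-\mu)\av{\phi\psi}$ for the characteristic polynomial \cref{eq:charEquation}, the eigenvalues of $\mathcal B$ are the two roots $\lambda_- \leq \lambda_+$ of $F$ (real and distinct thanks to \cref{eq:suportRealness}), and $\rho(\mathcal B) = \lambda_+$. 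Thus the whole lemma reduces to showing
\[
\lambda_+ \leq \lambda \iff F(\lambda) \geq 0,\ \av{\phi} < \lambda,\ \av{\psi} < \lambda.
\]

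The main tool is the observation that $\av{\phi}$ and $\av{\psi}$ always lie between the two roots. Indeed, since $\phi_i, \psi_i \in (0, 1]$ yields $\av{\phi} \leq 1$ and $\av{\phi\psi} > 0$, we compute
\[
F(\av{\phi}) = -(1-\av{\phi})\av{\phi\psi} \leq 0, \qquad F(\av{\psi}) = -(1-\av{\psi})\av{\phi\psi} \leq 0,
\]
and since $F$ is an upward-opening parabola this forces $\lambda_- \leq \av{\phi}, \av{\psi} \leq \lambda_+$. For the \emph{sufficiency} direction, I would assume \cref{eq:lambdaIneq1,eq:lambdaIneq2,eq:lambdaIneq3}. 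From \cref{eq:lambdaIneq2} and the observation just made, $\lambda > \av{\phi} \geq \lambda_-$. Since $F(\lambda) \geq 0$ by \cref{eq:lambdaIneq1} and $\lambda > \lambda_-$, the only way $F(\lambda) \geq 0$ can hold for an upward parabola is $\lambda \geq \lambda_+$, giving \cref{eq:ineqChar}.

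For the \emph{necessity} direction, assume $\lambda \geq \lambda_+$. Then $F(\lambda) \geq 0$ is immediate (proving \cref{eq:lambdaIneq1}), and the sandwiching above gives $\av{\phi} \leq \lambda_+ \leq \lambda$ and $\av{\psi} \leq \lambda_+ \leq \lambda$. The delicate part is promoting these to strict inequalities: I would argue that $\av{\phi} = \lambda_+$ would force $F(\av{\phi}) = 0$, which in view of $-(1-\av{\phi})\av{\phi\psi} = 0$ and the strict positivity $\av{\phi\psi} > 0$ would require $\av{\phi} = 1$. Under the standing assumption that $a_i, \chi_i \in (0, 1]$ and $\bar m \leq 1$, the degenerate case $\av{\phi} = 1$ forces $\bar m = 1$ and $\chi_i = a_i = 1$ for every $i$; I would exclude this case (the corresponding degenerate boundary can be handled by a limit argument) and conclude $\av{\phi} < \lambda_+ \leq \lambda$. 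The analogous argument yields $\av{\psi} < \lambda$.

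The only real obstacle is the strictness issue in the necessity direction; everything else follows from the sign structure of $F$ and the elementary observation that $\av{\phi}, \av{\psi}$ are sandwiched between the two roots. The rest is routine.
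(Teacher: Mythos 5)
Your proof is correct and follows essentially the same route as the paper's: both reduce \cref{eq:ineqChar} to $\rho(\mathcal B)\leq\lambda$ via $\alpha_{\rm u}=1-\delta+\rho(\mathcal B)\,n\beta$ and then analyze the quadratic \cref{eq:charEquation}. The difference is one of rigor rather than strategy: the paper simply asserts that \cref{fig:illustration} ``indicates'' the equivalence, whereas you supply the missing algebra --- the observation that $F(\av\phi)=-(1-\av\phi)\av{\phi\psi}\leq 0$ and $F(\av\psi)=-(1-\av\psi)\av{\phi\psi}\leq 0$, so that $\av\phi$ and $\av\psi$ are sandwiched between the two roots, is exactly the content the figure is meant to convey. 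One small step you and the paper both leave implicit is that $\rho(\mathcal B)$ equals the \emph{larger} root $\lambda_+$ rather than $\abs{\lambda_-}$; this follows because the trace satisfies $\av\phi+\av\psi-\av{\phi\psi}\geq\av\psi>0$ (using $\psi_i\leq 1$), so it is worth a line. Your edge-case concern in the necessity direction is genuine and not present in the paper's treatment: if $\bar m=1$ and $a_i=\chi_i=1$ for all $i$, then $\av\phi=1=\lambda_+$, and taking $\lambda=\lambda_+$ makes \cref{eq:ineqChar} hold while the strict inequality \cref{eq:lambdaIneq2} fails, so the lemma as literally stated breaks on that degenerate boundary. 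Excluding this corner as you propose (or simply noting that it is harmless for the downstream use in \cref{thm:bc}, where $\lambda>\av\phi$ is enforced automatically through the positivity of the auxiliary variable $\zeta=1/(\lambda-\av\phi)$) is the right fix; the paper's figure-based argument glosses over it.
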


\begin{proof}
By the proof of \cref{thm:eqanalysis}, inequality~\cref{eq:ineqChar} holds true if and only if $\lambda \geq \rho(\mathcal B)$. \cref{fig:illustration} indicates that $\lambda \geq \rho(\mathcal B)$ is equivalent to conditions \cref{eq:lambdaIneq1,eq:lambdaIneq2,eq:lambdaIneq3}.
\end{proof}

We can now prove \cref{thm:bc}: 

\begin{proof}[Proof of \cref{thm:bc}]
By \cref{lem:ineqanalysis}, the solutions of \cref{prb:} are given by those of the following optimization problem:
\begin{subequations}\label{eq:optpre}
\begin{align}
\minimize_{\lambda,\, {\af_{i}},\, \ap_i > 0}\ \ \ \, &    1-\delta +  \lambda n \beta 
\\
\st\ \ &\mbox{\cref{eq:lambdaIneq1,eq:lambdaIneq2,eq:lambdaIneq3,eq:boxConstraints,eq:budgetConstraint}}. 
\end{align}
\end{subequations}
Define the auxiliary variables $\zeta = {1}/{(\lambda - \av \phi)}$ and $\eta = {1}/{(\lambda - \av \psi)}$. Then, conditions~\cref{eq:lambdaIneq1,eq:lambdaIneq2,eq:lambdaIneq3} hold true if and only if $(1-\lambda)\av{\phi\psi}\zeta\eta \leq 1$, $\zeta > 0$, and $\eta > 0$. Therefore, the optimization problem~\cref{eq:optpre} is equivalent to the following optimization problem:
\begin{subequations}\label{eq:opt1}
\begin{align}
\minimize_{\lambda,\, {\af_{i}},\, \ap_i,\,\zeta,\,\eta > 0}\ \ &   \lambda
\\
\st\ \ \ &\mbox{\cref{eq:boxConstraints,eq:budgetConstraint}}, 
\\& (1-\lambda) \av{\phi\psi} \zeta\eta \leq 1, 
\\
& \zeta^{-1}  - \lambda + \av{\phi}= 0, 
 \label{eq:pre:zetaConst}
\\
& \eta^{-1} - \lambda + \av{\psi} = 0 , 
 \label{eq:pre:etaConst}
\end{align}
\end{subequations}
where we minimize $\lambda$ instead of $1-\delta + \lambda n \beta$. We claim that the optimal value of the objective function is equal to the one in the following optimization problem: 
\begin{subequations}\label{eq:opt2}
\begin{align}
\minimize_{\lambda,\, {\af_{i}},\, \ap_i,\,\zeta,\,\eta > 0}\ \ &  \lambda
\\
\st\ \ \  &\mbox{\cref{eq:boxConstraints,eq:budgetConstraint}}, 
\\ &(1-\lambda) \av{\phi\psi} \zeta\eta \leq 1, 
\label{eq:pre:quadConstraint}
\\
& \zeta^{-1} - \lambda + \av{\phi} \leq 0, 
\label{eq:prepre:zetaConst}
\\
& \eta^{-1} - \lambda  + \av{\psi}  \leq 0. 
\label{eq:prepre:etaConst}
\end{align}
\end{subequations}

Let $\lambda_1^\star$ and $\lambda_2^\star$ be the optimal values of the objective functions in problems \cref{eq:opt1} and \cref{eq:opt2}, respectively. We have $\lambda_1^\star \geq \lambda_2^\star$ because the constraints in problem \cref{eq:opt1} are more strict than those in \cref{eq:opt2}. Let us show $\lambda_1^\star \leq \lambda_2^\star$. Assume that the optimal value~$\lambda_2^\star$ in problem~\cref{eq:opt2} is attained by the parameters $(\lambda, \af_i, \ap_i, \zeta, \eta) = (\lambda^\star, \af_i^\star, \ap_i^\star, \zeta^\star, \eta^\star)$.  Since the left-hand sides of constraints \cref{eq:prepre:etaConst,eq:prepre:zetaConst} decease with~$\zeta$ and $\eta$, there exist nonnegative constants $\Delta \zeta$ and $\Delta \eta$ such that $\zeta = \zeta^\star - \Delta \zeta$ and $\eta = \eta^\star - \Delta \eta$ satisfy the equality constraints \cref{eq:pre:etaConst,eq:pre:zetaConst}. Moreover, since the left-hand side of the constraint~\cref{eq:pre:quadConstraint} increases with~$\zeta$ and~$\eta$, the new set of parameters $(\lambda, \af_i, \ap_i, \zeta, \eta) = (\lambda^\star, \af_i^\star, \ap_i^\star, \zeta^\star - \Delta \zeta, \eta^\star - \Delta \eta)$ still satisfies \cref{eq:pre:quadConstraint}. Furthermore, these changes of parameters do not affect the feasibility of the box constraints \cref{eq:boxConstraints} and the budget constraint~\cref{eq:budgetConstraint} because the constraints are independent of the values of $\zeta$ and $\eta$. Therefore, we have shown the existence of parameters achieving $\lambda = \lambda_2^\star$ but still satisfying the constraints in the optimization problem~\cref{eq:opt1}. This shows $\lambda_1^\star \leq \lambda_2^\star$, as desired.

Now, by rewriting the optimization problem \cref{eq:opt2} in terms of the variables $\tilde \lambda = 1-\lambda$ and substituting \cref{eq:def:phipsi} in  \cref{eq:opt2}, we obtain the optimization problem~\cref{eq:opt}. Notice that minimizing $\lambda$ is equivalent to maximizing $1-\tilde \lambda$, which is equivalent to minimizing $1/\tilde \lambda$.

Let us finally show that \cref{eq:opt} is a geometric program. The objective function, $1/\tilde \lambda$, is a posynomial in $\tilde \lambda$. The constraints \cref{eq:quadConstraint,eq:etaConstraint,eq:zetaConstraint,eq:boxConstraints} are posynomial constraints. Finally, \cref{assm:} guarantees that  constraint~\cref{eq:tildeCost<barC} is a posynomial constraint as well. This completes the proof of the \lcnamecref{thm:bc}.
\end{proof}

\begin{figure}[tb]
\centering
\includegraphics[width=1\linewidth,trim={0in 4.5in 0in 0in},clip]{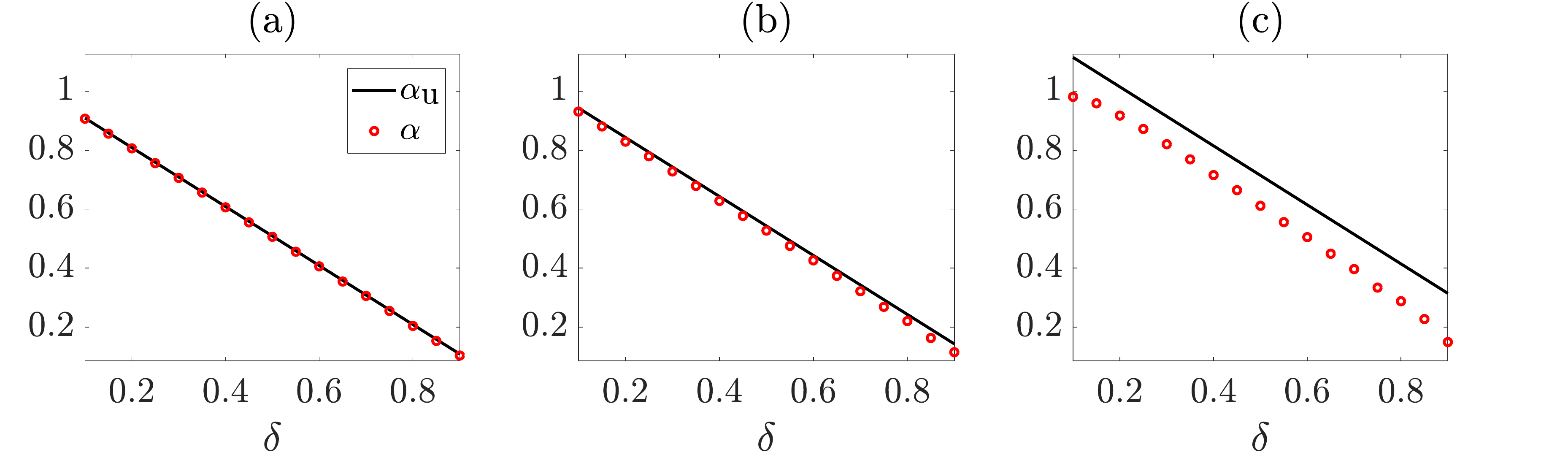}
\caption{Comparison between the numerically obtained decay rates $\alpha$ and their upper bounds $\alpha_{\rm u}$ in \labelcref{case:uniform} when $\beta = 0.8$. (a) $m=2$, (b) $m=10$, and (c) $m=50$.}
\label{fig:analysisUniformComparison}
\centering
\includegraphics[width=1\linewidth,trim={0in 0in 0in 4.45in in},clip]{fig4.pdf}
\caption{Discrepancy between the true decay rates and their upper bounds in \labelcref{case:uniform}. (a)~$m=2$, (b) $m=10$, and (c) $m=50$.}
\label{fig:analysisUniformErrors}
\end{figure}

\begin{figure}[tb]
\centering
\includegraphics[width=1\linewidth,trim={0in 4.5in 0in 0in},clip]{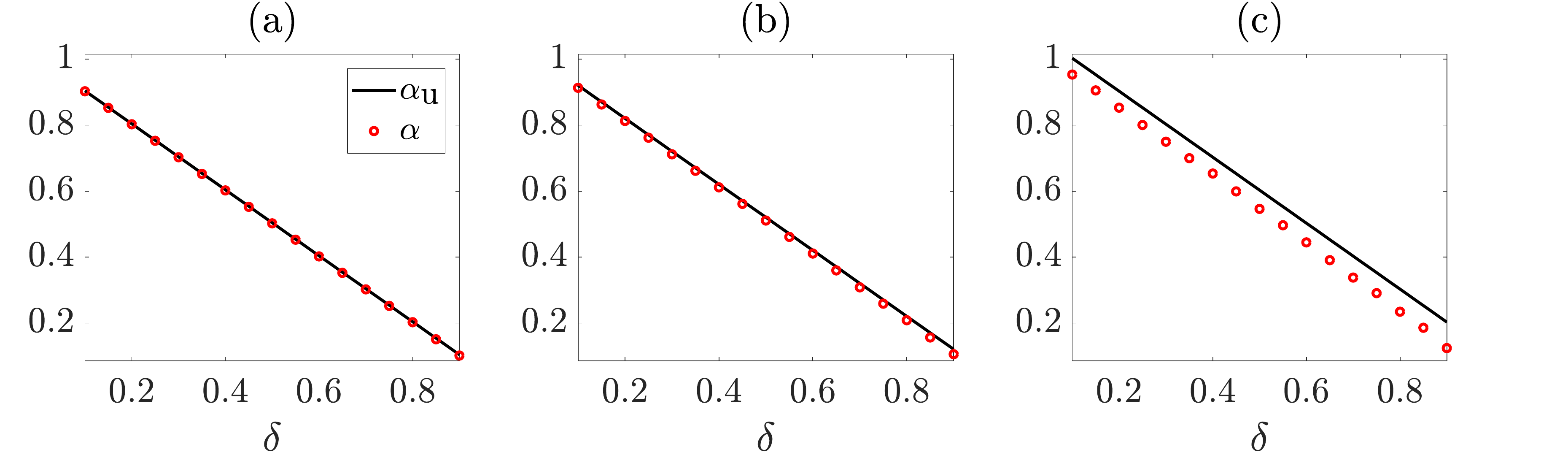}
\caption{Comparison between the numerically obtained decay rates $\alpha$ and their upper bounds $\alpha_{\rm u}$ in \labelcref{case:power} when $\beta = 0.8$. (a) $m=2$, (b) $m=10$, and (c) $m=50$.}
\label{fig:analysisPowerComparison}
\centering
\includegraphics[width=1\linewidth,trim={0in 0in 0in 4.45in in},clip]{fig5.pdf}
\caption{Discrepancy between the true decay rates and their upper bounds in \labelcref{case:power}. (a)~$m=2$, (b) $m=10$, and (c) $m=50$.}
\label{fig:analysisPowerErrors}
\end{figure}

\subsection{Performance-constrained optimal resource allocation}

In the same way as in the previous section, we can efficiently solve \cref{prb:pc} via geometric programming: 

\begin{theorem}\label{thm:pc}
Let $\af_i^\star$ and $\ap_i^\star$ be the solution of the following optimization problem: 
\begin{subequations}
\begin{align}
\minimize_{\tilde \lambda,\, {\af_{i}},\, \ap_i,\,\zeta,\,\eta > 0}\ \ & \sum_{i=1}^n (\costAF_i^+(\af_i) + \costAP_i^+(\ap_i))
\\
\st\ \ \  & 
\mbox{\upshape\cref{eq:quadConstraint,eq:etaConstraint,eq:zetaConstraint,eq:boxConstraints}}, 
\\
& \frac{\beta n + 1 - \delta - \bar \alpha}{\beta n}\tilde \lambda^{-1} \leq 1.\label{eq:tildeperformance<bargamma}
\end{align}
\end{subequations}
Then, the adaptation factor $\af_i = \af_i^\star$ and the acceptance rate $\ap_i = \ap_i^\star$ solve \cref{prb:pc}. Moreover, under \cref{assm:}, the optimization problem is a geometric program.
\end{theorem}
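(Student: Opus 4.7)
The plan is to mirror the argument of \cref{thm:bc}, swapping the roles of cost and decay-rate upper bound. First, applying \cref{lem:ineqanalysis} converts the performance constraint $\alpha_{\rm u} \leq \bar\alpha$ in \cref{prb:pc} into the existence of some $\lambda > 0$ satisfying the inequalities \cref{eq:lambdaIneq1,eq:lambdaIneq2,eq:lambdaIneq3} together with $1-\delta + \lambda n\beta \leq \bar\alpha$. Hence \cref{prb:pc} is equivalent to minimizing $\sum_{i=1}^n (\costAF_i(\af_i)+\costAP_i(\ap_i))$ over $\af_i, \ap_i, \lambda > 0$ subject to \cref{eq:boxConstraints}, the three $\lambda$-inequalities, and the performance bound on $\lambda$.

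Second, following the substitution used in the proof of \cref{thm:bc}, I would introduce the auxiliary variables $\zeta = 1/(\lambda - \av{\phi})$ and $\eta = 1/(\lambda - \av{\psi})$, rewrite \cref{eq:lambdaIneq1} as $(1-\lambda)\av{\phi\psi}\zeta\eta \leq 1$, and relax the defining equalities for $\zeta$ and $\eta$ into the posynomial inequalities \cref{eq:zetaConstraint,eq:etaConstraint}. The delicate verification is that this relaxation preserves the optimal value: given any feasible point of the relaxed program, one can decrease $\zeta$ and $\eta$ until the defining equalities are attained without violating \cref{eq:quadConstraint} (whose left-hand side is monotone increasing in $\zeta$ and $\eta$), and neither the cost objective nor the box or performance constraints depend on $\zeta$ or $\eta$.

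Third, the change of variables $\tilde\lambda = 1-\lambda$ turns $1-\delta + \lambda n\beta \leq \bar\alpha$ into $\tilde\lambda \geq (n\beta + 1-\delta - \bar\alpha)/(n\beta)$, which is exactly the monomial-upper-bound \cref{eq:tildeperformance<bargamma}. Using \cref{eq:def:phipsi} to rewrite $\av{\phi} = \bar m \av{\af}_{\!a}$, $\av{\psi} = \bar m \av{\ap}_{\!a}$, and $\av{\phi\psi} = \bar m^2 \av{\af\ap}_{\!a^2}$ then yields the stated optimization problem. To establish the geometric-program claim under \cref{assm:}, I would verify that each of \cref{eq:boxConstraints,eq:quadConstraint,eq:zetaConstraint,eq:etaConstraint,eq:tildeperformance<bargamma} is a posynomial inequality in the positive variables $\tilde\lambda, \af_i, \ap_i, \zeta, \eta$, and that the objective $\sum_i (\costAF_i^+(\af_i) + \costAP_i^+(\ap_i))$ is a posynomial in the same variables.

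The main obstacle is the relaxation step: unlike in \cref{thm:bc}, where the objective $1/\tilde\lambda$ automatically drove $\lambda$ to its minimal admissible value, here the cost objective is independent of $\zeta, \eta, \tilde\lambda$, so the equivalence between the equality-constrained and inequality-relaxed feasibility sets must be argued directly via the monotonicity of \cref{eq:quadConstraint} in $\zeta$ and $\eta$. Once this monotonicity argument is in place, the remaining steps follow the template of \cref{thm:bc} essentially verbatim.
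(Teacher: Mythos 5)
Your proposal is correct and follows essentially the same route as the paper, whose own proof of \cref{thm:pc} simply notes that constraint~\cref{eq:tildeperformance<bargamma} restates \cref{eq:performanceConstraint} and defers everything else to the proof of \cref{thm:bc}. You fill in exactly those deferred steps, and you correctly identify the one point needing adjustment --- that the equality-to-inequality relaxation for $\zeta,\eta$ must be justified by the monotonicity of \cref{eq:quadConstraint} alone, since the cost objective no longer involves $\tilde\lambda$, $\zeta$, or $\eta$.
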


\begin{proof}
Constraint~\cref{eq:tildeperformance<bargamma} is equivalent to the performance constraint~\cref{eq:performanceConstraint}. The rest of the proof is almost the same as the proof of \cref{thm:bc} and is omitted. 
\end{proof}

\section{Numerical simulations}\label{ref:numerical}

In this section, we illustrate the theoretical results obtained in previous sections by numerical simulations.

\begin{figure}[tb]
\centering
\includegraphics[width=.55\linewidth]{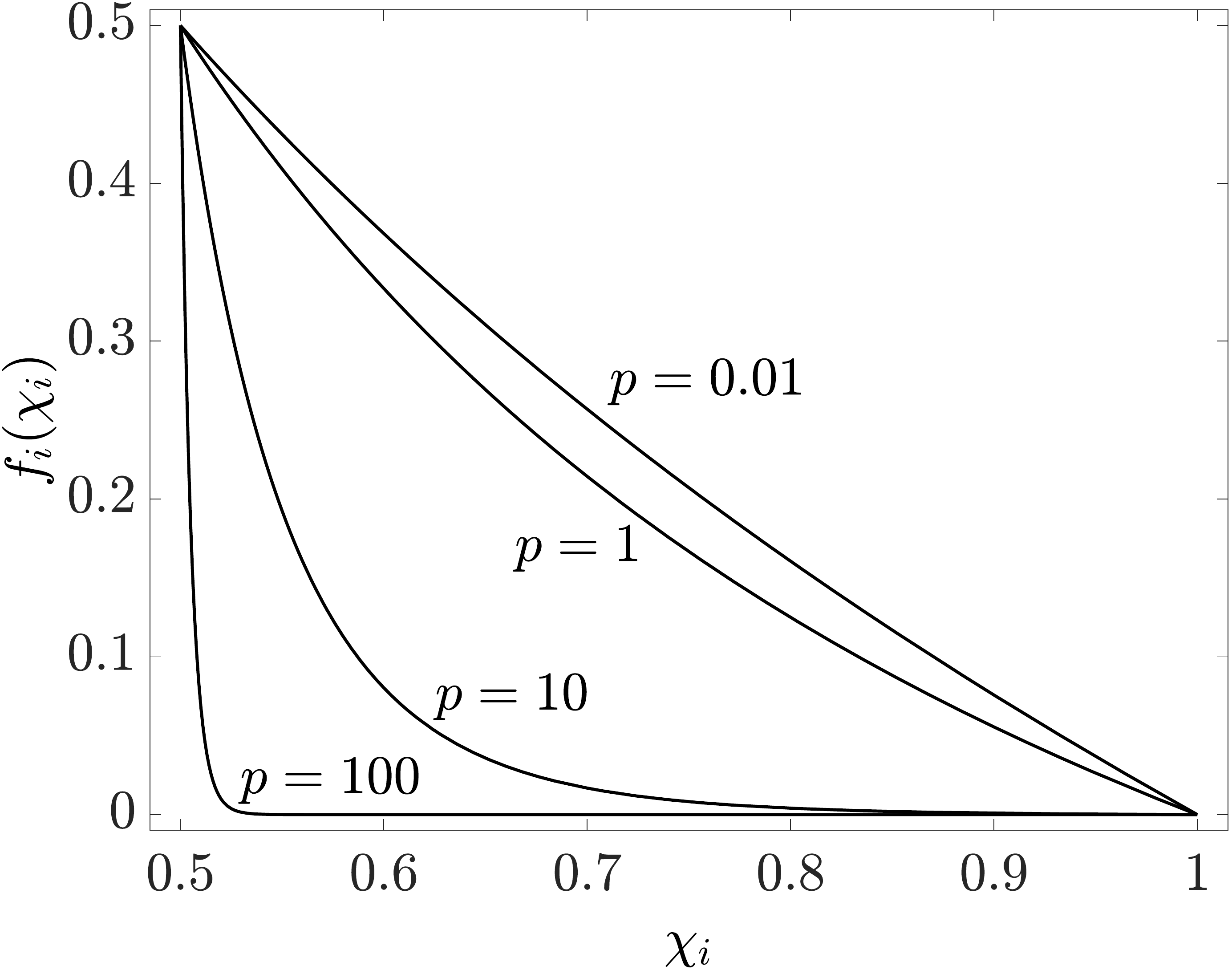}
\caption{Cost function $\costAF_i(\af_i)$ for $p=0.01$, $1$, $10$, and $100$ when $\myubar{$\af$}{1.5pt}_i=0.5$.}
\label{fig:costFunctions}
\end{figure}

\subsection{Accuracy of the upper bound}

We first illustrate the accuracy of the upper bound \cref{eq:upperBound} on the decay rate. We use an activity-driven network with $n=250$ nodes and study the following two cases:
\begin{enumerate}[labelindent=\parindent, leftmargin=*, label=Case \arabic*., ref=Case \arabic*]
\item\label{case:uniform} Activity rates following a uniform distribution over $[0, 10^{-2}]$; 
\item\label{case:power} Activity rates following a probability distribution $F(a)$ that is proportional to $a^{-2.8}$ in the interval~$[10^{-3}, 1]$ and equal to zero elsewhere~\cite{Perra2012}. 
\end{enumerate}
We assume that both the adaptation factors~$\af_i$ and acceptance rates~$\ap_i$ follow a uniform distribution over $[0, 1]$. For various values of the infection rate $\beta$, the recovery rate~$\delta$, and $m$, we compute the decay rate~$\alpha$ based on numerical simulations of the model and its upper bound~$\alpha_{\rm u}$. To compute the decay rate~$\alpha$, we use Monte Carlo simulations. For each triple $(\beta, \delta, m)$, we run \num[group-separator={,}]{10000} simulations of the activity-driven A-SIS model with all nodes being infected at time $t=0$. In each numerical simulation, we compute the probability vector~$p(t)$ for each $t=0$, $1$, $2$, $\dotsc$ until the norm $\lVert p(t) \rVert$ falls below $0.1$. We then estimate the decay rate by $\alpha = \max_{t}t^{-1}\log \lVert p(t) \rVert$.

In \cref{fig:analysisUniformComparison}, we let $\beta = 0.8$ and compare the decay rates and their upper bounds in \labelcref{case:uniform} for various values of $\delta$ and $m$. We confirm that $\alpha_{\rm u}$ bounds the numerically obtained decay rates. The discrepancy $\alpha_{\rm u}-\alpha$ increases with $m$. To further examine how the discrepancy depends on the parameters, we present the discrepancy for various values of $\beta$, $\delta$, and $m$ in  \cref{fig:analysisUniformErrors}. Besides the aforementioned dependence of the discrepancy on $m$, we also see that the discrepancy tends to be large when $\beta$ is large. We observe the same trend in the case of the power-law distribution of the activity rate (\labelcref{case:power}; shown in  \cref{fig:analysisPowerComparison,fig:analysisPowerErrors}).

\subsection{Optimal resource distribution}

\newcommand{\ubarpi}{\myubar{$\ap$}{.25pt}}

\begin{figure}[tb]
\centering
\includegraphics[width=1\linewidth]{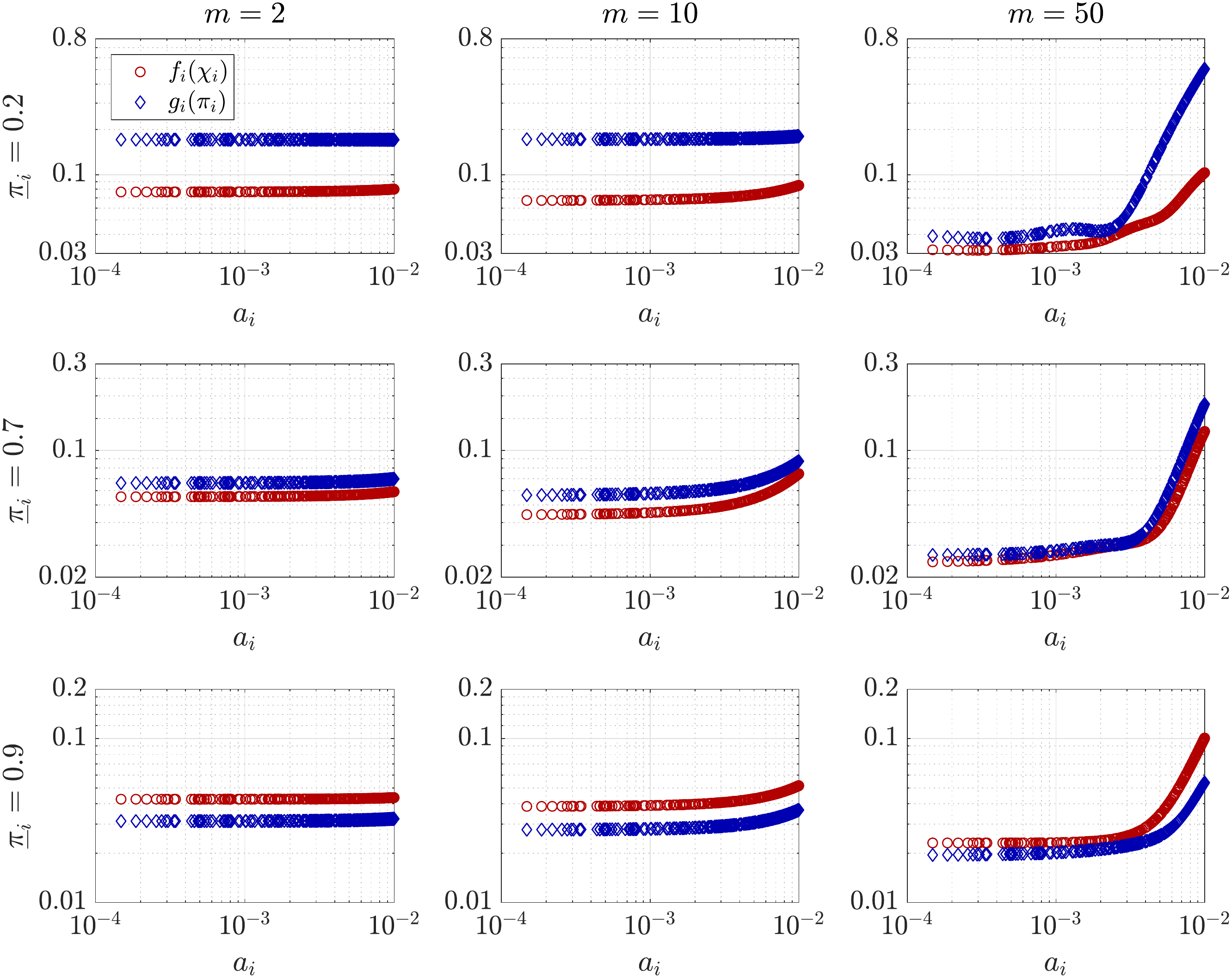}
\caption{Optimal investments on adaptation factors (circles) and acceptance rates (squares) in \labelcref{case:uniform}. Left column: $m=2$, middle column: $m=10$, right column: $m=50$. Top row: $\ubarpi_i = 0.2$, middle row: $\ubarpi_i = 0.7$, bottom row: $\ubarpi_i = 0.9$.}
\label{fig:bc:random}
\end{figure}

We numerically illustrate our framework to solve the optimal resource allocation problems developed in \Cref{sec:optimizaiton}. We assume $\bar \af_i = 1$ and $\bar \ap_i = 1$ in the box constraints~\cref{eq:boxConstraints}. We  use the following cost functions (similar to the ones in~\cite{Preciado2014}):
\begin{equation}\label{eq:costFunctions}
\costAF_i(\af_i) =  (1-\ubar \af_i)\frac{\af_i^{-p} - 1}{{\ubar \af}_i^{-p} - 1},\quad 
\costAP_i(\ap_i) =  (1-\ubar \ap_i)\frac{\ap_i^{-q} - 1}{{\ubar \ap}_i^{-q} - 1}. 
\end{equation}
These cost functions satisfy \cref{assm:}. Parameters $p, q>0$ tune the shape of the cost functions as illustrated in \cref{fig:costFunctions}. Because the cost functions are normalized as $\costAF_i(\ubar \af_i)=1-\ubar \af_i$, $\costAF_i(\bar \af_i)=\costAF_i(1)=0$, $\costAP_i(\ubar \ap_i)=1-\ubar \ap_i$, and $\costAP_i(\bar \ap_i)=\costAP_i(1)=0$, the maximum adaptation $(\af_i, \ap_i) = (\ubar \af_i, \ubar \ap_i)$ ($1\leq i \leq n$) in the network is achieved with the budget
\begin{equation}
C_{\max} = 2n - \sum_{i=1}^n (\ubar \af_i + \ubar \ap_i). 
\end{equation}

As in our previous simulations, we consider the activity-driven A-SIS model over a network with $n=250$ nodes. We let $m$ be either $2$, $10$, or $50$. Let $\ubar \af_i = 0.8$. We let the value of $\ubar \ap_i$ be either $0.2$, $0.7$, or $0.9$, and use $p=q=0.01$ for the cost functions~\cref{eq:costFunctions}. We use the fixed budget $\bar C = C_{\max}/4$. For each pair $(m, \ubar \pi_i)$, we determinate the adaptation factors and acceptance rates for the cost-constrained optimal resource allocation problem (\cref{prb:}) by solving the geometric program shown in \cref{thm:bc}.

\begin{figure}[tb]
\centering \includegraphics[width=1\linewidth]{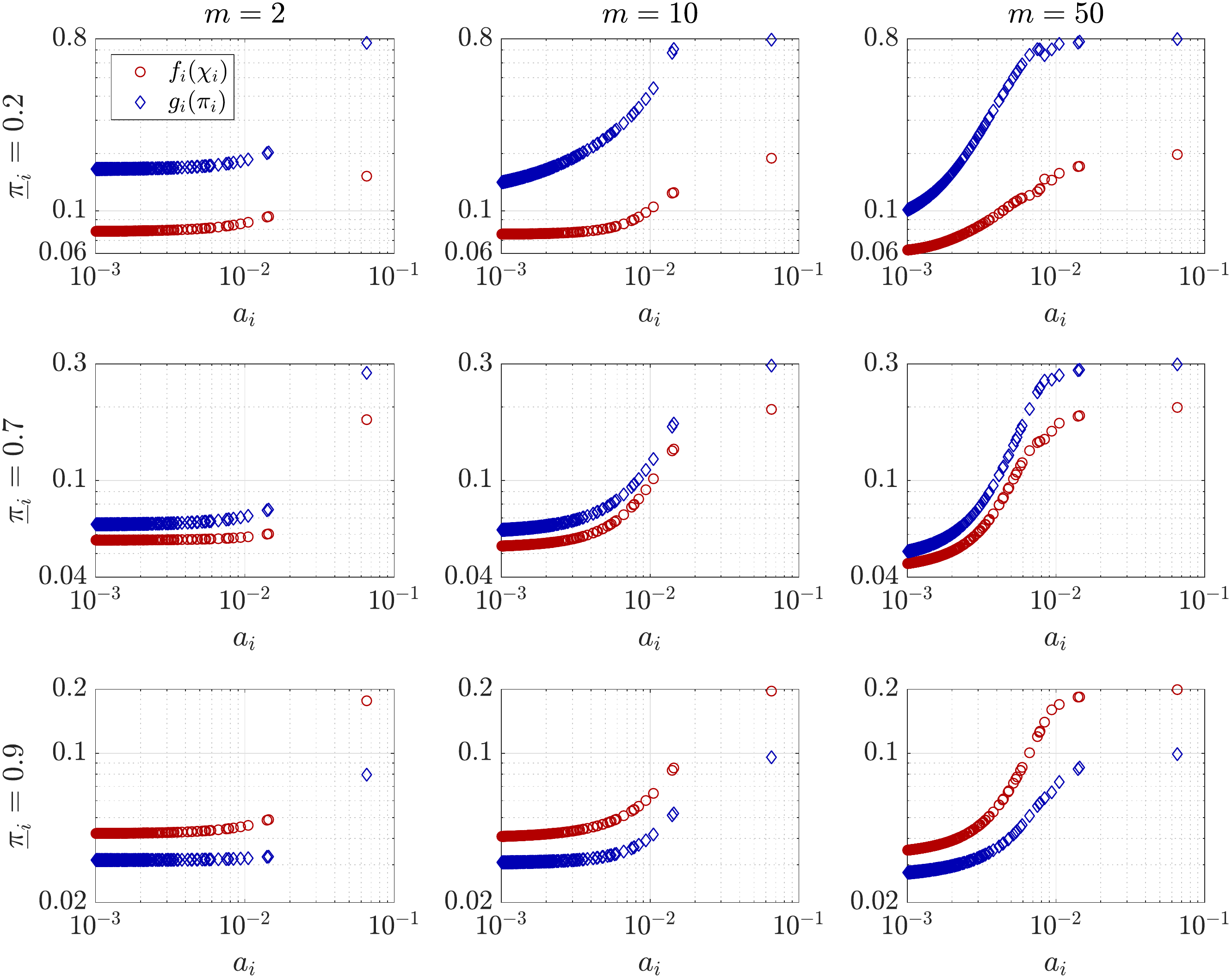} \caption{Optimal investments on adaptation factors (red circles) and acceptance rates (blue squares) in \labelcref{case:power}. Left column: $m = 2$, middle column: $m = 10$, right column: $m = 50$. Top row: $\ubarpi_i = 0.2$, middle row: $\ubarpi_i = 0.7$, bottom row: $\ubarpi_i = 0.9$.}\label{fig:bcpower} \end{figure}

The optimal investments on the adaptation factors and acceptance rates (i.e., $\costAF_i(\af_i)$ and $\costAP_i(\ap_i)$) are shown in \cref{fig:bc:random} for \labelcref{case:uniform}. We see that the smaller the lower limit of the acceptance rate $\ubar \ap_i$, the more we should invest on decreasing the acceptance rates. We also find that, in the case of $m=2$ and $10$, the resulting investment is almost independent of the activity rates of nodes. This trend disappears for larger values of~$m$. In the case of $m=50$, the optimal solution disproportionately invests on the nodes having high activity rates. This result reflects the structure of the optimization problem~\cref{eq:opt} for the following reason. If $m \ll n$, then $\bar m \ll 1$. Therefore, the set of constraints in the optimization problem~\cref{eq:opt} approximately reduces to the set of constraints~\cref{eq:boxConstraints},
\begin{gather}
 \zeta^{-1} + \tilde \lambda \leq 1,
\\
 \eta^{-1} + \tilde \lambda \leq 1,  
\end{gather}
and \cref{eq:tildeCost<barC}. 
These four constraints do not involve activity rates, which leads to optimal adaptation factors and acceptance rates that are independent of the activity rates. On the other hand, for a large $m$,  constraints~\cref{eq:quadConstraint,eq:zetaConstraint,eq:etaConstraint} involving the weighted sums~$\av{\chi}_{\!a}$, $\av{\pi}_{\!a}$, and $\av{\chi \pi}_{\!a^2}$ become tighter, rendering investments on high-activity nodes more effective.

The optimal investments for \labelcref{case:power} are shown in \cref{fig:bcpower}. As in \labelcref{case:uniform}, the smaller $\ubar \pi_i$, the optimal solution invests more on decreasing the acceptance rates. However, the dependence of the optimal solution on the value of $m$ is not as strong as in \labelcref{case:uniform}.

\section{Conclusion}

In this paper, we have studied epidemic processes taking place in temporal and adaptive networks. Based on the activity-driven network model, we have proposed the activity-driven A-SIS model, where infected individuals adaptively decrease their activity and reduce connectivity with other nodes to prevent the spread of the infection. In order to avoid the computational complexity arising from the model, we have first derived a linear dynamics able to upper-bound the infection probabilities of the nodes. We have then derived an upper-bound on the decay rate of the expected number of infected nodes in the network in terms of the eigenvalues of a $2\times 2$ matrix. Then, we have shown that a small correlation between the two different adaptation mechanisms is desirable for suppressing epidemic infection. Furthermore, we have proposed an efficient algorithm to optimally tune the adaptation rates in order to suppress the number of infected nodes in networks. We have illustrated our results by numerical simulations.

\appendix
\section{Proof of \cref{eq:rho(A)=rho(nB)}}\label{app:rhoA}
The image of $\mathcal A$ is spanned by $\onev$ and $\onev-\psi$. All but two eigenvalues of $\mathcal A$ are zero. Therefore, an eigenvector of $\mathcal A$ corresponding to the spectral radius of $\mathcal A$ is a linear combination of vectors $\onev$ and  $\onev-\psi$. Assume that $v = s_1 \onev + s_2(\onev -\psi)$ is such an eigenvector of $\mathcal A$ with eigenvalue~$\lambda$. By comparing the coefficients of the vectors $\onev$ and $\onev -\psi$ in the eigenvalue equation $ \mathcal A v = \lambda v$, we obtain
\begin{align}
&s_1n + s_2\left(n - \sum_{i=1}^n \psi_i\right)= \lambda s_1 , \label{eq:57}
\\
&-s_1\left(n-\sum_{i=1}^n \phi_i\right) - s_2\left(n-\sum_{i=1}^n \psi_i - \sum_{i=1}^n \phi_i + \sum_{i=1}^n \psi_i \phi_i\right) =\lambda s_2 . \label{eq:58}
\end{align}
We rewrite \cref{eq:57,eq:58} as 
\begin{equation}
n\mathcal B \begin{bmatrix}
s_1 \\ s_2
\end{bmatrix} 
= \lambda 
\begin{bmatrix}
s_1 \\ s_2
\end{bmatrix} 
, 
\end{equation}
where we have used the notations in~\cref{eq:def:langphipsirnv,eq:def:langphipsirang}. Therefore, $\lambda$ is an eigenvalue of the matrix~$n\mathcal B$, as desired.

\section*{Acknowledgments} 

N.M.~acknowledges the support provided through JST CREST Grant Number JPMJCR1304 and the JST ERATO Grant Number JPMJER1201, Japan. V.M.P.~acknowledges the support of the US National Science Foundation under grant CAREER-ECCS-1651433.

\end{document}